\newtheorem{thm}{Theorem}[section]
\newtheorem{lem}[thm]{Lemma}
\newtheorem{prop}[thm]{Proposition}
\theoremstyle{definition}
\newtheorem{defn}[thm]{Definition}
\theoremstyle{remark}
\def\beq{\begin{eqnarray}}
\def\eeq{\end{eqnarray}}
\def\bsp{\begin{split}}
\def\esp{\end{split}}
\newcommand{\mbold}[1]{\mbox{\boldmath{\ensuremath{#1}}}}
\newcommand{\RR}{\mathbb{R}}
\def\bl {\bf{\ell}}
\def \bl {\mbox{\boldmath{$\ell$}}}
\newcommand{\el}{\mbox{\boldmath $\ell$}}
\newcommand{\en}{\mbox{\boldmath $n$}}
\newcommand{\emm}{\mbox{\boldmath $m$}}
\newcommand{\emmB}{\mbox{\boldmath $\bar m$}}
\newcommand{\be}{\begin{eqnarray}}
\newcommand{\ee}{\end{eqnarray}}
\newcommand{\bse}{\begin{subequations}}
\newcommand{\ese}{\end{subequations}}
\newcommand{\bdm}{\begin{displaymath}}
\newcommand{\edm}{\end{displaymath}}
\def \tS {\tilde{S}}
\def \tq {\tilde{q}}
\def \tr {\tilde{r}}
\def \ts {\tilde{s}}
\begin{document}
\title{\Large\textbf{Kundt Spacetimes}}
\author{{\large\textbf{A. Coley$^{1}$, S. Hervik$^{2,~ 1}$, G. Papadopoulos$^{1}$, N. Pelavas$^{1}$}}
%EndAName
%\address{
\vspace{0.3cm} \\
$^{1}$Department of Mathematics and Statistics,\\
Dalhousie University,
Halifax, Nova Scotia,\\
Canada B3H 3J5
\vspace{0.2cm}\\
$^{2}$Faculty of Science and Technology,\\
 University of Stavanger,\\  N-4036 Stavanger, Norway
%\email{
\vspace{0.3cm} \\
\texttt{aac, gopapad, pelavas@mathstat.dal.ca, sigbjorn.hervik@uis.no} }
\date{\today}
\maketitle
\pagestyle{fancy}
\fancyhead{} % clear all header fields
\fancyhead[EC]{A. Coley, S. Hervik, G.O. Papadopoulos, N. Pelavas}
\fancyhead[EL,OR]{\thepage}
\fancyhead[OC]{Kundt Spacetimes}
\fancyfoot{} % clear all footer fields
\begin{abstract}

Kundt spacetimes are of great importance in general relativity in
four dimensions and have a number of physical applications in
higher dimensions in the context of string theory. The degenerate
Kundt spacetimes have many special and unique mathematical
properties, including their invariant curvature structure and
their holonomy structure. We provide a rigorous geometrical
kinematical definition of the general Kundt spacetime in four
dimensions; essentially a Kundt spacetime is defined as one
admitting a null vector that is geodesic, expansion-free,
shear-free and twist-free. A Kundt spacetime is said to be
degenerate if the preferred kinematic and curvature null frames
are all  aligned. The degenerate Kundt spacetimes are the only
spacetimes in four dimensions that are not
$\mathcal{I}$-non-degenerate, so that they are not determined by
their scalar polynomial curvature invariants. We first discuss the
non-aligned Kundt spacetimes, and then turn our attention to the
degenerate Kundt spacetimes. The degenerate Kundt spacetimes are
classified algebraically by the Riemann tensor and its covariant
derivatives in the aligned kinematic frame; as an example, we
classify Riemann type D degenerate Kundt spacetimes in which
$\nabla(Riem),\nabla^{(2)}(Riem)$ are also of type D. We discuss
other local characteristics of the degenerate Kundt spacetimes.
Finally, we discuss degenerate Kundt spacetimes in higher
dimensions.

\end{abstract}
%%%%%%%%%%%%%%%%%%%%%%%%%%%%%%%%%%%%%%%%%%%%%

\newpage

%%%%%%%%%%%%%%%%%%%%%%%%%%%%%%%%%%%%%%%%%%%%%

\section{Introduction}

Kundt spacetimes have very interesting mathematical properties and
are of great importance in general relativity in 4 dimensions and
have a number of topical applications in higher dimensions in the
context of string theory. {\footnote{The primary focus of this
paper is on the mathematical properties of Kundt spacetimes. The
physical motivation for studying these spacetimes, particularly in
higher dimensions, has been discussed previously (however, see the
discussion and references in section 7.2).}} The Kundt spacetimes,
and especially the degenerate Kundt subclass (to be defined
below), have many special and unique mathematical properties. The
degenerate Kundt metrics are the only metrics not determined by
their scalar curvature invariants \cite{inv}, and they have
extraordinary holonomy structure \cite{johan}. In particular, it
is not possible to define a unique timelike curvature operator in
a degenerate Kundt spacetime and consequently there is no unique,
preferred timelike direction associated with its geometry (through
the curvature and its covariant derivatives). Therefore, there is
no intrinsic $1 + 3$ split of the degenerate Kundt spacetime in
general relativity (or a $1 + (n-1)$ split in general).

In particular, for any given Lorentzian spacetime
$(\mathcal{M},g)$ we denote by, $\mathcal{I}$, the set of all
scalar polynomial curvature invariants constructed from the Riemann tensor
and its covariant derivatives. If there does not exist a
metric deformation of $g$ having the same set of invariants as
$g$, then we call the set of invariants $\mathcal{I}$-\emph{non-degenerate}, and the spacetime metric $g$ is called
\emph{$\mathcal{I}$-non-degenerate}. This means that for a metric
which is $\mathcal{I}$-non-degenerate, the invariants locally
characterize the spacetime uniquely. In \cite{inv} it was proven
that a four-dimensional (4D) Lorentzian spacetime metric is either
{\em $\mathcal{I}$-non-degenerate or degenerate Kundt}. This
striking result implies that metrics not determined by their
scalar polynomial curvature invariants  (at least locally) must be
of degenerate Kundt form.

We first present a rigorous geometrical kinematical definition of
the general Kundt spacetime in 4 dimensions, and discuss some of
its (local) invariant characteristics. We present some necessary
conditions, in terms of the curvature components, for a space to
define (locally) a Kundt geometry and discuss the issue of
complete gauge fixing. Although section 2.1 is essentially review,
a number of new results are presented in sections 2.2 and 2.3.

The 4D Kundt class is defined as those spacetimes admitting a null
vector $\ell$ that is geodesic, expansion-free, shear-free and
twist-free \cite{Higher,kramer}; it follows that in 4D there
exists a {\em kinematic} frame with
$\kappa=\sigma=\rho=\epsilon=0$. We first discuss the non-aligned
(non-degenerate) Kundt spacetimes (and show that this set is
non-trivial). We then turn to the degenerate Kundt spacetimes. All
of the results presented in sections 3--6 are new.

In a  degenerate Kundt spacetime, the kinematical frame and the
Riemann $\nabla^k(Riem)$ type II aligned null frame are all {\it
aligned}.  We algebraically classify the degenerate Kundt
spacetimes in terms of their Riemann type in the aligned kinematic
frame, or more finely by their Ricci and Weyl types separately.
Within each algebraic type, it is also useful to classify the
covariant derivative(s) of the Riemann tensor (and particularly
$\nabla(Riem)$ and $\nabla^2(Riem)$) in terms of their algebraic
types. As an example, we classify
$\nabla(Riem),\nabla^{(2)}(Riem)$ type D degenerate Kundt
spacetimes. {\footnote{This is the first time the covariant
derivative(s) of the Riemann tensor have been explicitly
classified algebraically.}}

We then discuss the Kundt spacetimes and their invariant
properties. Degenerate Kundt spacetimes in 4D are not
\emph{$\mathcal{I}$}-non-degenerate, so that  the scalar
polynomial curvature invariants do not locally characterize the
spacetime uniquely \cite{inv}. We discuss the equivalence problem
for degenerate Kundt spacetimes, focussing on the special type D
subclass, and discuss Kundt spacetimes and scalar invariants.

Finally, we turn to higher dimensions. The (general) higher
dimensional Kundt class is defined in a very similar manner to the
4D class \cite{Higher}.  We review important examples of
degenerate Kundt spacetimes in higher dimensions and discuss
applications in the context of string theory in sections 7.1 and
7.2. We note that many of the results discussed in this paper can
be generalized to higher dimensions; in particular, two important
higher dimensional theorems are presented in section 7.3.

Let us remark on the technical assumptions made in this paper. The
following theorems hold on neighbourhoods where the Riemann, Weyl
and Segre types do not change.  In the algebraically special cases
we also need to assume that the algebraic type of the
higher-derivative curvature tensors also do not change, up to the
appropriate order.  Most crucial is the definition of the
curvature operators and in order for these to be well defined the
algebraic properties of the curvature tensors need to remain the
same over a neighbourhood. Finally, we note that extensive use is
made of the Newman-Penrose (NP) formalism \cite{NP,kramer} and
many calculations are done using GRTensor II \cite{grtensor}.

%%%%%%%%%%%%%%%%%%%%%%%%%%%%%%%%%%%%%%%%%%%%%%

%\newpage

\section{Kundt geometries in 4 dimensions}
It is the purpose of this section to provide a complete local
account of the Kundt geometries in all their generality. The first
subsection contains a kinematical definition which implies a
proposition which, in turn, implies a lemma. All three can be used
to invariantly describe/characterise (locally) a Kundt geometry in
kinematical terms. The second subsection gives some necessary
conditions, in terms of the curvature components, for a space to
define (locally) a Kundt geometry. The third subsection attacks
the same problem,  but from the point of view of complete gauge
fixing.

\subsection{Definition, Proposition and Lemma}
The \textit{Kundt} geometry is defined (cf.\ \cite{Kundt} and \cite{kramer}) as:
\begin{defn}\label{definition}
A space $\mathcal{S}$ defines (locally) a Kundt geometry in 4
dimensions if there exists (locally) a null congruence of which
the tangent vector field is hypersurface orthogonal
\footnote{Other terms are also in use such as:
\textit{non-rotating}, \textit{normal}, \textit{non-twisting}
--although the latter is reserved for the null congruences only.}
(and therefore geodesic), non-diverging \footnote{Or
\textit{non-expanding}.}, and non-shearing.
\end{defn}
\noindent In the original paper \cite{Kundt}, and within the
context of the NP formalism, appeal to the energy conditions is
made: a space $\mathcal{S}$ defines (locally) a Kundt geometry in
4 dimensions if there exists (locally) a null congruence
$\el(\equiv l^{a}\partial_{a})$,
of which the tangent vector field is hypersurface orthogonal (and therefore geodesic),
non-diverging, and obeying the energy condition $T_{ab}l^{a}l^{b}\geqslant 0$; the latter,
through the Ricci identities and the Einstein's equations, implies that it will be non-shearing as well
\footnote{A null congruence  of which the tangent vector field is hypersurface orthogonal
(and therefore geodesic), and non-shearing defines a \textit{ray congruence} --see \cite{Kundt} and the references therein).}.\\
By virtue of the Ricci identities one could replace the energy
condition by the vanishing of the shear, thus providing a
completely kinematic definition (i.e., that given at the
beginning). In fact, this attitude is partially adopted in
\cite{kramer} (Chapter 31, section 31.2; but immediately after the
authors consider further restrictions, like the conditions of the
\textit{Goldberg-Sachs} theorem).

Based on the definition \eqref{definition} one can prove the following:
\begin{prop}\label{proposition}
A space $\mathcal{S}$ defines (locally) a Kundt geometry in 4 dimensions if and only if there exists (locally) a family of frames $\{\mathbf{e}_{A}\}=\{\el,\en,\emm,\emmB\}$ (i.e., defined up to null rotations about $\el$) characterized by the conditions that:
$$\epsilon=\kappa=\rho=\sigma=0$$
for the spin connexion coefficients. Also, the generic line
element is given, in a local coordinate system
$\{x^{a}\}=\{u,v,\zeta,\overline{\zeta}\}$ (coordinates $u$ and
$v$ are real, while coordinate $\zeta$ is complex): \be
ds^{2}=2du(Hdu+dv+Wd\zeta+\overline{W}d\overline{\zeta})-2P^{-2}d\zeta
d\overline{\zeta}, ~~~\partial_{v}P=0 \ee
\end{prop}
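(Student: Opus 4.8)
The plan is to establish the two assertions in turn: first the frame characterisation in terms of spin coefficients, then the coordinate normal form. For the frame statement I would argue through the geometric meaning of the NP spin coefficients attached to $\ell$. Completing $\ell$ to a null tetrad $\{\el,\en,\emm,\emmB\}$, the optical scalars of the congruence are read off from the screen-space projection of $\nabla_a\ell_b$: $\kappa$ measures the failure of $\ell$ to be geodesic, $\mathrm{Re}\,\rho$ the expansion, $\mathrm{Im}\,\rho$ the twist, and $\sigma$ the shear. Hence the kinematic hypotheses of Definition \ref{definition} translate directly: hypersurface orthogonality is equivalent to $\mathrm{Im}\,\rho=0$ together with geodesy (so $\kappa=0$), non-divergence gives $\mathrm{Re}\,\rho=0$, and the shear-free condition gives $\sigma=0$; thus $\kappa=\rho=\sigma=0$. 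This is the ``only if'' direction, and the ``if'' direction is the same computation run backwards. To obtain $\epsilon=0$ as well I would exhaust the residual gauge: a boost $\ell\mapsto\lambda\ell$ reparametrises the geodesics affinely and removes $\mathrm{Re}\,\epsilon$, while a spin $\emm\mapsto e^{\i\theta}\emm$ removes $\mathrm{Im}\,\epsilon$, and neither transformation disturbs $\kappa=\rho=\sigma=0$. The only remaining frame changes preserving all four vanishing coefficients are the null rotations about $\el$, which fixes exactly the residual freedom asserted in the Proposition.

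For the line element I would build Kundt-adapted coordinates. Hypersurface orthogonality supplies a scalar $u$ with $\ell_a=\partial_a u$, which I take as the first coordinate, so the hypersurfaces $u=\text{const}$ are null. Affinely parametrising the geodesics of $\ell$ by $v$ fixes $\el=\partial_v$, and then $\ell_a\,dx^a=du$ forces $g_{av}=\delta^u_a$, i.e.\ $g_{uv}=1$ and $g_{vv}=g_{v\zeta}=g_{v\bar\zeta}=0$. A complex transverse coordinate $\zeta$ is chosen on a cross-section and Lie-dragged along $\ell$, so that $\partial_v\zeta=0$; aligning $\emm$ with the transverse screen then lets me set $g_{\zeta\zeta}=g_{\bar\zeta\bar\zeta}=0$, leaving the transverse block $-2P^{-2}\,d\zeta\,d\bar\zeta$. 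The surviving components $g_{uu}=2H$, $g_{u\zeta}=W$, $g_{u\bar\zeta}=\bar W$ reproduce the stated metric. The crucial observation is that, with the constant covector components $\ell_b=\delta^u_b$, one has $\nabla_a\ell_b=-\Gamma^u_{ab}$, so $\rho=\sigma=0$ says precisely that the screen projection of this quantity vanishes; this is exactly the statement that the transverse $2$-metric is Lie-constant along $\ell$, giving $\partial_v(P^{-2})=0$ and hence $\partial_v P=0$.

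I expect the main obstacle to be the coordinate construction rather than the spin-coefficient bookkeeping. Two points need care: showing that a single complex transverse coordinate $\zeta$ can be propagated coherently off a cross-section so that the entire transverse metric is captured by one conformal factor $P$ (this is where integrability of the screen distribution along $\ell$, guaranteed by $\rho=\sigma=0$, is genuinely used), and verifying cleanly that the vanishing of the screen part of $\nabla_a\ell_b$ is equivalent to $\partial_v P=0$, with no residual $v$-dependence smuggled into $W$ or $H$. Once the identification of $\rho$ and $\sigma$ with the $v$-derivatives of the transverse metric is made explicit, the remaining steps are routine normalisations together with a check that the residual coordinate freedom ($u\mapsto u'(u)$, $v\mapsto v+f$, holomorphic $\zeta\mapsto\zeta'(u,\zeta)$) matches the null rotations about $\el$.
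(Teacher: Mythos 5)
Your proposal is correct and follows essentially the same route as the paper: translate the kinematic hypotheses into $\kappa=\rho=\sigma=0$, use the residual boost and spin to kill $\epsilon$, invoke Frobenius to obtain $u$ and an affine parameter $v$ with $\el=\partial_v$, and deduce from $\rho=\sigma=0$ that the transverse $2$-metric is $v$-independent and hence expressible in isothermal form $P^{-2}d\zeta\,d\bar\zeta$ with $\partial_v P=0$. The only cosmetic difference is that the paper extracts the $v$-independence of the transverse frame components from the Hodge-dual formulae for $\rho$ and $\sigma$ via the first Cartan structure equations, whereas you read the same fact off the screen projection of $\nabla_a\ell_b$.
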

\begin{proof}
Let $\mathcal{S}$ be a space  which defines (locally) a Kundt geometry. The optimal way towards the proof of the assertion would perhaps be provided by the observation that the NP equations are invariant under arbitrary frame transformations (for the case under consideration, \textit{Lorentz transformations}). Therefore, this freedom should be implemented in order to gauge fix (even partially) the frame relative to the geometric structure(s) assumed for the space.\\
Indeed, let $\el'$ denoting the vector field which is tangent to the congruence under consideration. The effect of a boost transformation \cite{Car}:
\begin{align*}
\el' & \rightarrow \el, ~~~\el=z\overline{z}^{-1}\el'
\end{align*}
with a proper value for the complex parameter $z$, can make the
vector field $\el$ to be not only a geodesic (as a consequence of
being both null and normal) but also affinely parameterized; i.e.,
$\kappa=0$, and $\epsilon+\overline{\epsilon}=0$ , respectively.
In this gauge, the properties for being normal, non-diverging and
non-shearing correspond to the vanishing of
$\rho-\overline{\rho}$, $\rho+\overline{\rho}$, and $\sigma$,
respectively (see mainly \cite{Sa}, but also \cite{NP,Ch} as
well). Therefore: \bse\label{kundt conditions}
\begin{align}
\epsilon+\overline{\epsilon}=0\\
\kappa=0\\
\rho=0\\
\sigma=0
\end{align}
\ese Since $\el$ is normal, the corresponding covector (also
denoted by $\el$) complies with all the conditions of the
\textit{Frobenius' Theorem}. The latter guarantees (at least
locally) that there exist two non-constant, non-zero, real
functions $F$, and $u'$ such that $\el=Fdu'$. Another
non-constant, non-zero, real function $u$ can be defined such that
$du=Fdu'$. In addition, if $v$ denotes an affine parameter, then a
local coordinate system $\{y^{a}\}=\{u,v,\chi^{1},\chi^{2}\}$ can
be adopted, where all coordinates are real, such that
$\el=\partial_{v}$ or, alternatively, $l^{a}\equiv
dy^{a}/dv=(0,1,0,0)$. Then: \bse\label{first frame}
\begin{align}
\el &=l^{a}\partial_{a} & l^{a}&=(0,1,0,0)\\
\el &=l_{a}dy^{a}=du & l_{a}&=(1,0,0,0)\\
\en &=n_{a}dy^{a}=N_{0}du+dv+N_{A}d\chi^{A} & n_{a}&=(N_{0},1,N_{1},N_{2})\\
\emm &=m_{a}dy^{a}=M_{0}du+M_{A}d\chi^{A} & m_{a}&=(M_{0},0,M_{1},M_{2})
\end{align}
\ese
(where $A$ $\in$ $\{1,2\}$) by completing the frame.\\
In this coordinate system, the line element assumes the form:
\begin{align}\label{first line element}
ds^{2}&=2du\big((N_{0}-M_{0}\overline{M_{0}})du+dv+(N_{A}-M_{0}\overline{M_{A}}-\overline{M_{0}}M_{A})d\chi^{A}\big)\notag\\
&-2M_{A}\overline{M_{B}}d\chi^{A}d\chi^{B}
\end{align}
But, $\sigma=-i\star(\el\wedge\emm\wedge d\emm)$, and
$\rho=\frac{i}{2}\star\big(\el\wedge(\emmB\wedge d\emm-\emm\wedge
d\emmB-\en\wedge d\el)\big)$ (the $\star$ denotes the
\textit{Hodge dual}) as the first Cartan Structure Equations
imply. The implication of the vanishing of these two spin
connexion coefficients results in $\partial_{v}M_{A}=0$. A general
coordinate transformation, which as such leaves all the NP
quantities invariant, of the form:
\begin{align*}
\{u,v,\chi^{A}\} & \rightarrow \{u,v,\psi^{A}\}\equiv\{u,v,\zeta,\overline{\zeta}\}, ~~~\zeta=\frac{1}{\sqrt{2}}\big(f^{1}(u,\chi^{A})+if^{2}(u,\chi^{A})\big)
\end{align*}
where the $f^{A}$ are real functions, renders the $2-$dimensional
part of the previous line element,
$M_{A}\overline{M_{B}}d\chi^{A}d\chi^{B}$ apparently conformally
flat\footnote{because, any $2-$dimensional metric is conformally
flat.}; say
$P^{-2}d\zeta d\overline{\zeta}$, with $\partial_{v}P=0$.\\
Hence, in the local coordinate system
$\{x^{a}\}=\{u,v,\zeta,\overline{\zeta}\}$, the line element
\eqref{first line element} assumes the form: \be
ds^{2}=2du(Hdu+dv+Wd\zeta+\overline{W}d\overline{\zeta})-2P^{-2}d\zeta
d\overline{\zeta}, ~~~\partial_{v}P=0 \ee
where the coordinates $u$ and $v$ and the functions $H$ and $P$ are real, while the coordinate $\zeta$ and the function $W$ are complex.\\
Finally, the effect of a spin transformation \cite{Car}, which
preserves \eqref{kundt conditions}, can be used to set the
imaginary $\epsilon$ equal to zero. Then, the conditions: \bse
\begin{align}
\epsilon=0\\
\kappa=0\\
\rho=0\\
\sigma=0
\end{align}
\ese
are invariant under null rotations about the $\el$ vector field, defining thus a family of
frames. This completes the first logical direction of the proposition. The other direction is
straightforward to prove.
\end{proof}
Based on proposition \eqref{proposition}, it is possible to prove
the following:
\begin{lem}\label{lemma}
A space $\mathcal{S}$ defines (locally) a Kundt geometry in 4 dimensions if and only if there exists (locally) a family of frames $\{\mathbf{e}_{A}\}=\{\el,\en,\emm,\emmB\}$ (i.e., defined up to null rotations about $\el$ and spin-boosts) characterized by the conditions that:
$$\kappa=\rho=\sigma=0$$
for the spin connection coefficients.
\end{lem}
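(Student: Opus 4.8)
The plan is to bootstrap from the preceding Proposition, whose family of frames already satisfies the stronger set $\epsilon=\kappa=\rho=\sigma=0$, and to show that once spin-boosts are admitted as additional frame freedom the condition $\epsilon=0$ becomes redundant. The decisive structural input is the different transformation behaviour of the three weighted coefficients $\kappa,\rho,\sigma$ as against the connection coefficient $\epsilon$. Under a spin-boost $\el\mapsto\lambda\bar\lambda\,\el$, $\emm\mapsto\lambda\bar\lambda^{-1}\emm$ (a positive rescaling of $\el$ together with a phase rotation of $\emm$), each of $\kappa,\rho,\sigma$ carries a definite boost- and spin-weight and hence transforms \emph{homogeneously}, i.e.\ by multiplication by a nowhere-vanishing factor; in particular the locus $\{\kappa=\rho=\sigma=0\}$ is invariant. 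The coefficient $\epsilon$, by contrast, transforms \emph{inhomogeneously}: its real part $\mathrm{Re}\,\epsilon$ is shifted by the derivative of $\log|\lambda|^{2}$ along $\el$ and its imaginary part $\mathrm{Im}\,\epsilon$ by the corresponding derivative of the spin phase. This inhomogeneity is precisely what prevents $\epsilon=0$ from being spin-boost invariant and, at the same time, what permits $\epsilon$ to be gauged away.

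For the implication ``$\Leftarrow$'', I would start from a frame satisfying $\kappa=\rho=\sigma=0$ and perform a spin-boost designed to annihilate $\epsilon$. The boost parameter is fixed by integrating a first-order linear ordinary differential equation along the integral curves of $\el$ (equivalently, by switching to an affine parameter, exactly the reparametrisation already carried out in the proof of the Proposition) so as to kill $\mathrm{Re}\,\epsilon$, and the residual spin phase is then fixed by a further integration killing $\mathrm{Im}\,\epsilon$. Since $\kappa,\rho,\sigma$ transform homogeneously, they remain zero under this spin-boost; the resulting frame therefore satisfies $\epsilon=\kappa=\rho=\sigma=0$, and the Proposition identifies $\mathcal{S}$ as a Kundt geometry.

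For the implication ``$\Rightarrow$'', I would take the frame furnished by the Proposition, which satisfies $\epsilon=\kappa=\rho=\sigma=0$ and \emph{a fortiori} $\kappa=\rho=\sigma=0$, and verify that the conditions $\kappa=\rho=\sigma=0$ persist over the whole orbit of this frame under null rotations about $\el$ and spin-boosts, so that they genuinely \emph{characterise} a family of frames. Spin-boosts preserve the locus by the homogeneity noted above. For a null rotation about $\el$ with complex parameter $c$ one uses that the induced changes in $\rho$ and $\sigma$ are each proportional to $\kappa$ (in standard conventions $\kappa\mapsto\kappa$, $\rho\mapsto\rho+\bar c\,\kappa$, $\sigma\mapsto\sigma+c\,\kappa$); since $\kappa=0$ all three coefficients are left invariant. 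Because null rotations about $\el$ together with spin-boosts exhaust the stabiliser of the null direction $[\el]$ within the Lorentz group, they account for all the residual frame freedom once $\el$ has been singled out, and the family is thus exactly the one described.

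The step I expect to be the main obstacle is the local gauge-fixing of $\epsilon$ in the ``$\Leftarrow$'' direction: one must guarantee that a spin-boost achieving $\epsilon=0$ exists on a neighbourhood, which reduces to the local solvability of the affine-parametrisation equation along $\el$ together with the analogous integration for the spin phase. This is the only analytic ingredient of the argument, and since it is already implicit in the Proposition, the rest reduces to the weight-versus-inhomogeneity dichotomy spelled out above.
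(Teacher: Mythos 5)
Your proposal is correct and follows essentially the same route as the paper: the paper's (one-sentence) proof likewise rests on the observation that a spin-boost preserves the vanishing of $\kappa$, $\rho$, $\sigma$ while it can convert a non-vanishing $\epsilon$ into a vanishing one and vice versa, i.e.\ exactly your homogeneous-versus-inhomogeneous transformation dichotomy. Your write-up merely makes explicit the weight bookkeeping, the null-rotation invariance (already noted in the Proposition), and the local integrability of the gauge choice, all of which the paper leaves implicit.
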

\begin{proof}
Indeed, a spin-boost transformation, which preserves the vanishing
of the coefficients $\kappa$, $\rho$, $\sigma$, can be used to
covert a non vanishing $\epsilon$ to a vanishing quantity and vice
versa.
\end{proof}

\subsection{Necessary Conditions}
Based on the Proposition and the Lemma, in principle we can deduce
some necessary conditions which must be met in order for a general
space time, given in an arbitrary frame, to (locally) define a
Kundt geometry.

Indeed, let $\mathcal{S}$ be a general space defined locally in
terms of the completely arbitrary local frame
$\{\mathbf{e}_{A}\}=\{\el,\en,\emm,\emmB\}$. Then, according to
the proofs of Proposition 2.1 and Lemma 2.1, it is easily inferred
that $\mathcal{S}$ defines (locally) a Kundt geometry if and only
if a null rotation about the $\en$ (co)vector (with a proper
complex parameter $z$) suffices to make the three connection
coefficients $\kappa'$, $\rho'$ and $\sigma'$ in the primed frame
vanish. However, in that frame these conditions, by virtue of the
NP equations, correspond to the following curvature conditions
(cf. the next subsection): \bse\label{necessary_conditions}
\begin{align}
\Psi'_{0}&=\Phi'_{00}=0\\
\Psi'_{1}&=\Phi'_{01}.
\end{align}
\ese An observation is pertinent at this point. These are not all
the curvature conditions implied by the vanishing of the spin
connection coefficients; in general, we need to completely gauge
fix the frame in order for all the curvature conditions to emerge.
Then, we would have to establish the one-to-one correspondence
amongst the spin connection coefficient conditions and the ensuing
curvature conditions. The equations above comprise just a subset
of the totality of the curvature conditions. Moreover, the
previously mentioned one-to-one correspondence has not been
established. From this point of view, these conditions are simply
necessary but not sufficient.

Using the well known transformation laws of the NP variables under
the action of the Lorentz group, and specifically here the null
rotations about the (co)vector $\en$, conditions
\eqref{necessary_conditions} assume the following form (in the
initial, arbitrary frame): \bse
\begin{align}
&\Psi_{0}+4z\Psi_{1}+6z^{2}\Psi_{2}+4z^{3}\Psi_{3}+z^{4}\Psi_{4}=0\\
&\Phi_{00}+2\overline{z}\Phi_{01}+2z\Phi_{10}+4z\overline{z}\Phi_{11}
+\overline{z}^{2}\Phi_{02}+z^{2}\Phi_{20}\notag\\
&+2\overline{z}z^{2}\Phi_{21}+2\overline{z}^{2}z\Phi_{12}+z^{2}\overline{z}^{2}\Phi_{22}=0\\
&\Psi_{1}+3z\Psi_{2}+3z^{2}\Psi_{3}+z^{3}\Psi_{4}=\Phi_{01}+2z\Phi_{11}+\overline{z}\Phi_{02}\notag\\
&+2z\overline{z}\Phi_{12}+z^{2}\Phi_{21}+z^{2}\overline{z}\Phi_{22}.
\end{align}
\ese The objective here is to eliminate the Lorentz parameter $z$
from this system, but the complexity of the general system renders
this goal unrealistic.  However, an algorithmic procedure could be
established:  in the initial, arbitrary frame, one tries to solve
algebraically one of the first two equations (the choice has to do
with the ``extremal'' cases of Conformally or Ricci flat spaces).
As the Fundamental Theorem of Algebra states, any of these two is
susceptible to 4 solutions --say $\{z_{1},z_{2},z_{3},z_{4}\}$;
the degeneracy of this set reflects the Petrov or the
Petrov-Pleba\'{n}ski type, respectively.  Suppose that one chooses
to solve, for example, the first equation.  If the result is a set
of 4 independent solutions (i.e., Petrov type I), then the second
and the third equations must be satisfied separately for each of
these four solutions. Therefore, in that case eight independent
(in general) syzygies amongst the components of the Weyl and Ricci
tensors will emerge.  If the result is 3 independent solutions
(i.e., Petrov type II), then there emerges six independent
syzygies (in general), etc.

Of course, the entire analysis above comprises one of the two
possibilities, which corresponds to the alignment of the generic
frame in such a way that the $\el'$ vector is aligned with the
characteristic Kundt vector. The other possibility concerns the
alignment of the generic frame in such a way that the $\en'$
vector is aligned with the characteristic Kundt vector. In order
to get the set of necessary conditions for the latter case, one
has simply to use the $\circ$ operation which interchanges the
r\^{o}le of the $\el$ and $\en$ (co)vectors:
\begin{align}
\circ : \el \longleftrightarrow \en
\end{align}
along with changes like $\Psi_{0} \longleftrightarrow \Psi_{4}$ etc. (see, the standard literature on the NP formalism).

\subsection{Complete gauge fixing and necessary Curvature conditions}

It is expected that the special features which define, in
kinematical terms, a Kundt geometry will impose some constraints
upon the curvature components; i.e., not all the components of the
Riemann tensor (and the components of its covariant derivatives)
will be independent. A standard approach towards these conditions
is to implement the entire Lorentz freedom in order to completely
gauge fix the frame relative to the geometric structure(s) assumed
for the space.  Of course, there is no ``recipe'' for finding the
optimal gauge fixing; i.e., that fixing which will exhibit as many
characteristics as possible.  However, it is possible to postulate
some criteria. For instance, a gauge which renders many of the NP
equations purely algebraic is better (more useful) than another
one which will give fewer algebraic equations.  Thus, a suggested
way is the following.

Let
$\{\mathbf{e}''''_{A}\}=\{\el'''',\en'''',\emm'''',\emmB''''\}$ be
a family of local frames, such that the characteristic vector
field (in the definition of the Kundt geometries) is a linear
combination of the frame vectors. First, a null rotation about
$\en''''$ can be used to align the vector $\el'''$ with the
characteristic vector field. Then, the covector $\el'''$ satisfies
the condition $\el'''\wedge\ d\el'''=0 \Leftrightarrow \el'''=Fdu$
locally, for some non-constant, non-zero, real functions $F$, $u$
(Poincar\'{e}'s Lemma). Second, a boost in the $\el'''-\en'''$
plane can be used to affinely parameterize the vector field
$\el''$: $\el''=du$. This in turn implies: \bse
\begin{align}
\kappa''&=0\\
\epsilon''+\overline{\epsilon''}&=0\\
\rho''-\overline{\rho''}&=0\\
\tau''-\overline{\alpha''}-\beta''&=0
\end{align}
\ese by virtue of $d\el''=0$. The properties of being geodesic,
affinely parametrized, normal, non-diverging, and non-shearing
correspond to the conditions: \bse
\begin{align}
\kappa''&=0\\
\epsilon''+\overline{\epsilon''}&=0\\
\rho''&=0\\
\sigma''&=0
\end{align}
\ese
Third, a spin in the $\emm''-\emmB''$ plane, which preserves all the previous restrictions, can be used to set:
\be
\epsilon'=0
\ee
Finally, a null rotation about $\el'$, preserving again all the previous restrictions, is implemented to set:
\be
\gamma=0
\ee
Now, the completely gauged system reads:
\bse\label{gauge_fixing}
\begin{align}
\epsilon&=0\\
\gamma&=0\\
\kappa&=0\\
\rho&=0\\
\sigma&=0\\
\alpha&=\overline{\tau}-\overline{\beta}
\end{align}
\ese
Substitution of \eqref{gauge_fixing} into the system of the NP and Eliminant equations (cf. \cite{Ch}) plus some algebraic manipulation result in the following curvature conditions of zero order:
\bse
\begin{align}
\Psi_{0}&=0\\
\Phi_{00}&=0\\
\Phi_{01}-\Psi_{1}&=0\\
\Lambda-\beta\pi-\tau\pi+\tau\overline{\beta}-\beta\overline{\tau}-\tau\overline{\tau}
+\overline{\beta}\overline{\pi}-\overline{\tau}\overline{\pi}-\Phi_{11}-\Psi_{2}&=0\\
2\Lambda-2\beta\tau-2\tau\overline{\beta}+2\tau\overline{\tau}-\Phi_{02}+\Psi_{2}&=0\\
-2\Lambda-2\beta\pi+2\tau\overline{\beta}-2\tau\overline{\tau}+2\overline{\beta}\overline{\tau}
+2\overline{\beta}\overline{\pi}+\Phi_{02}-\Psi_{2}&=0
\end{align}
\ese The presence of the spin connection coefficients $\beta$,
$\tau$, $\pi$, should not concern us since in a completely gauged
system everything is invariant. So, all $+2$ boost weight terms
vanish, there is only one independent $+1$ boost weight term,
while not all 0 boost weight terms are independent. Unfortunately,
it seems that an elimination of the spin connection coefficients,
which would lead to gauge independent syzygies, is not possible.

%\newpage

\section{Kundt spacetimes}
The 4D Kundt spacetimes are those spacetimes admitting a null
vector $\ell$ that is geodesic, expansion-free, shear-free and
twist-free \cite{Higher,kramer}. Since $\ell$ is geodesic, without
loss of generality we can always chose a complete NP tetrad with
$\ell$ (so that the NP spin coefficient $\kappa=0$) in which
$\epsilon=0$ and hence the geodesic $\ell$ is affinely
parametrized. Demanding that $\ell$ is expansion-free, shear-free
and twist-free in this frame, then implies that $\sigma=\rho=0$.
Hence in the canonical Kundt frame, or {\em kinematic} frame,
$\kappa=\sigma=\rho=\epsilon=0$. {\footnote {For $\ell$ to be
affinely parametrized it is only necessary for
$\epsilon+{\overline\epsilon}=0$. However, since all of the
calculations that follow are done in the frame (gauge) in which
$\epsilon=0$ for convenience (it is always possible to choose such
a gauge without loss of generality), and this choice of gauge does
not affect the definitions of algebraic types (of, for example,
the Riemann tensor), for simplicity of presentation we have {\em
defined} the kinematic frame so that $\epsilon=0$ in addition to
$\kappa=\sigma=\rho=0$.}} It follows that in 4D there exists a
{\em kinematic} frame with
\begin{equation}
\kappa=\sigma=\rho=\epsilon=0, \label{kinframe}
\end{equation}
in which the (general) Kundt metric can always be written
\begin{equation}
d s^2=2d u\left[d v+H(v,u,x^k)d u+W_{i}(v,u,x^k)d x^i\right]+
g_{ij}(u,x^k)d x^id x^j. \label{Kundt}
\end{equation}
The~metric functions $H$, $W_{i}$ and $g_{ij}$  ($i,j = 2,3$) are
real and will satisfy additional conditions if the Einstein field
equations are applied. {\footnote{It is this real form for the
Kundt metric which is more appropriate for generalizing to higher
dimensions.}}

We note that in the above kinematic frame, the positive  boost
weight +2 terms of the Riemann tensor are automatically zero.
However, in the above kinematic frame, if $W_{,vv} \neq 0$, then
the Ricci tensor (and Riemann tensor) have positive boost weight
terms, and if $H_{,vvv} \neq 0$ the covariant derivatives of the
Riemann tensor have positive  boost weight terms.

\subsubsection{Higher dimensions}

The (general) higher dimensional Kundt class is defined in a very
similar manner (and is consistent with the definition of
generalized Kundt spacetimes in $n$-dimensions given, for example,
in \cite{Higher}); namely, the spacetimes admit a null vector
$\ell$ which satisfies $\ell_{a;b}\ell^b \propto \ell_a$ (which is
zero for an affine parameterization), $\ell^a_{~;a}=0$,
$\ell_{(a;b)}\ell^{(a;b)}=0$ and $\ell_{[a;b]}\ell^{[a;b]}=0$, so
that there exists a kinematic frame in which $\ell$ is geodesic,
expansion-free, shear-free and twist-free. In particular, the
$n$-dimensional Kundt metric can be written as in (\ref{Kundt}),
where now  $i,j = 2, ..., n-1$. In the kinematic Kundt null frame
\begin{equation}
{\bl}= d u, \quad {\bf n}= d v+Hd u+W_{ i}{{\bf m}}^{ i}, \label{null frame}
\quad {\bf m}^{ i}= m^{i}_{ j}d x^{j},
\end{equation}
such that $g_{ij}=m^{k}_{\ i}m_{kj}$ and where $m^{i}_{\ j}$ can
be chosen to be in upper triangular form by an appropriate choice
of frame. The metric (\ref{Kundt}) possesses a null vector field
${\bl} = \frac{\partial}{\partial{v}}$ which is  geodesic,
non-expanding, shear-free and non-twisting; i.e., $\ell_{i;j}=0$
\cite{Higher}. Since ${\bl}$ is affinely parametrized, the Ricci
rotation coefficients in the kinematic frame are thus given by:
\begin{equation}
\ell_{(a;b)} = L \ell_{a}\ell_{b} + L_i(\ell_{a} m^{i}_{b} +
\ell_{b} m^{i}_{a}).  \label{Rcoefs}\end{equation}

\subsection{Non-aligned Kundt metrics}
Consider (\ref{Kundt}) with $g_{ij}(u,x^k)=\delta_{ij}$ and define the standard Newman-Penrose
(NP) tetrad for the Kundt metric having an ${\bl}$ with $\kappa=\sigma=\rho=\epsilon=0$ (i.e., the kinematic frame).  It immediately follows from the NP equations that in the kinematic frame
\begin{eqnarray}
\Psi_{0}=\Phi_{00}=0, & \Psi_{1}=\Phi_{01}.   \label{npkinem}
\end{eqnarray}
Moreover, the two complex Weyl invariants $I$ and $J$ are nonzero,
and $I^3 - 27J^2 \neq 0$.  Therefore (\ref{Kundt}) is Petrov type
I (and hence from (\ref{npkinem}) also Riemann type I).

We shall now consider the possibility that the Weyl tensor is
Petrov type II with a Weyl aligned null direction (WAND) ${\bf k}$
not equal to ${\bl}$.  Clearly, if $\Psi_{1}=0$ then ${\bf
k}={\bl}$ and the Petrov type II Weyl aligned frame coincides with
the kinematic frame; henceforth we assume $\Psi_{1}\neq 0$.  The
following Lorentz transformations are applied successively in
order to normalize the Weyl tensor and determine conditions so
that ${\bf k}$ is a Petrov type II WAND.
\begin{enumerate}
\item[1)] Null rotations about ${\bl}$ ($c$ ; $\Psi^{(1)}_{A}$)
\item[2)] Spin-Boost ($a$, $\theta$ ; $\Psi^{(2)}_{A}$)
\item[3)] Null rotations about ${\bf n}$ ($b$ ; $\Psi^{(3)}_{A}$)
\end{enumerate}
\noindent (where in $(\cdot)$ we give the corresponding Lorentz
parameters and denote the Weyl scalars $\Psi^{(i)}_{A}$,
$A=0\ldots 4$, resulting from transformation $i)$.  By applying 1)
$c$ can always be chosen such that
\begin{equation}
\Psi^{(1)}_{4}=\Psi_{4}+4c\Psi_{3}+6c^2\Psi_{2}+4c^3\Psi_{1}=0.  \label{psi41}
\end{equation}
\noindent Next, apply 2) so that $\Psi^{(2)}_{1}=1$.  Since the
kinematic frame is invariant under 1) and 2) we still have
${\bl}^{(2)}$ with $\kappa^{(2)}=\sigma^{(2)}=\rho^{(2)}=0$.
Lastly, apply 3) and requiring that ${\bl}^{(3)}={\bf k}$ is a
Petrov type II WAND gives
\begin{eqnarray}
\Psi^{(3)}_{0} & = & 2b\left[2+3b\Psi^{(2)}_{2}+2b^2\Psi^{(2)}_{3}\right]  =  0 \label{psi03} \\
\Psi^{(3)}_{1} & = & 1+3b\Psi^{(2)}_{2}+3b^2\Psi^{(2)}_{3}  =  0  \label {psi13} \\
\Psi^{(3)}_{2} & = & \Psi^{(2)}_{2}+2b\Psi^{(2)}_{3}  \\
\Psi^{(3)}_{3} & = & \Psi^{(2)}_{3}  \\
\Psi^{(3)}_{4} & = & 0.
\end{eqnarray}
\noindent Notice, $\Psi^{(2)}_{2}\neq 0$ otherwise equations
(\ref{psi03}) and (\ref{psi13}) give a contradiction.  Solving
$\Psi^{(3)}_{0}=\Psi^{(3)}_{1}=0$ gives that the null rotation
parameter, $b$, is the root of $3b\Psi^{(2)}_{2}+4=0$ subject to
the constraint $9\Psi^{(2)^{\, 2}}_{2}=16 \Psi^{(2)}_{3}$.
Therefore, if there exists a Petrov type II WAND ${\bf
k}={\bl}^{(3)}\neq {\bl}$ then the nonzero Weyl scalars are
\begin{eqnarray}
\Psi^{(3)}_{2}=-\frac{1}{2}\Psi^{(2)}_{2}, & & \Psi^{(3)}_{3}=\Psi^{(2)}_{3},
\end{eqnarray}
\noindent and satisfy the following condition in the Weyl aligned
frame:
\begin{equation}
9\Psi^{(3)^{\, 2}}_{2}=4 \Psi^{(3)}_{3}.  \label{ptii3}
\end{equation}
\noindent Transforming (\ref{ptii3}) to the kinematic frame gives
\begin{equation}
16\Psi_{1}\Psi_{3}-9\Psi_{2}^{2}+12c\Psi_{1}\Psi_{2}+12c^2\Psi_{1}^{2}=0.  \label{ptiikin}
\end{equation}
\noindent Since $c$ is a root of (\ref{psi41}), we obtain the necessary condition
\begin{equation}
9\Psi_{2}^{3}-16\Psi_{1}\Psi_{2}\Psi_{3}+6\Psi_{1}^{2}\Psi_{4}+2c\Psi_{1}\left[3\Psi_{2}^2-4\Psi_{1}\Psi_{3}\right]=0, \label{ptiineccon}
\end{equation}
\noindent which implies the following two cases.

\noindent \emph{Case 1}: $3\Psi_{2}^2-4\Psi_{1}\Psi_{3}=0$.  From (\ref{ptiikin}) it is easily shown that
\begin{equation}
\Psi^{(1)}_{3}=\Psi_{3}+3c\Psi_{2}+3c^2\Psi_{1}=0.  \label{psi31}
\end{equation}
\noindent In this case equation (\ref{ptiineccon}) reduces to the constraint
\begin{equation}
2\Psi_{1}^{2}\Psi_{4}-\Psi_{2}^{3}=0,    \label{ptiiicon2}
\end{equation}
\noindent and since $\Psi_{4}^{(1)}=0$ we obtain the null rotation
parameter, $c=-\Psi_{2}/(2\Psi_{1})$.  Therefore,
$\Psi_{2}^{(1)}=0$ with $\Psi_{1}^{(1)}$ as the only remaining
nonzero Weyl scalar; thus ${\bf k}={\bf n}^{(1)}$ is the WAND for
a Petrov type III Weyl tensor in the kinematic frame
${\bl}^{(1)}={\bl}$.

\noindent \emph{Case 2}: $3\Psi_{2}^2-4\Psi_{1}\Psi_{3}\neq 0$.  The null rotation parameter, $c$, is the root of (\ref{ptiineccon}) which upon substitution into (\ref{psi41}) and (\ref{ptiikin}) results in
\begin{equation}
F:=27\Psi_{1}^2\Psi_{4}^2-108\Psi_{1}\Psi_{2}\Psi_{3}\Psi_{4}+64\Psi_{1}\Psi_{3}^3+54\Psi_{4}\Psi_{2}^3 -36\Psi_{3}^2\Psi_{2}^2=0.     \label{Fcon}
\end{equation}
\noindent If there exists a Petrov type II WAND ${\bf
k}={\bl}^{(3)}\neq {\bl}$, then in the kinematic frame the Weyl
scalars satisfy (\ref{Fcon}) which is equivalent to the Weyl
aligned frame condition (\ref{ptii3}).  Since there was no
remaining freedom left at equation (\ref{ptii3}) it must
invariantly define Petrov type II; a similar conclusion holds for
$F=0$.  More precisely, in the kinematic frame we have a
factorization of the invariant
\begin{equation}
I^3 - 27J^2 = -\Psi_{1}F=0,    \label{ijf}
\end{equation}
\noindent whose vanishing is the well-known result that the Weyl
tensor is Petrov type II or D (assuming $I$ and $J$ are nonzero).
By assuming $\Psi_{1}\neq 0$, it was possible to normalize
$\Psi_{1}^{(2)}$ in transformation 2).  However, it is known
\cite{kramer} that in a NP tetrad where $\Psi_{0}=\Psi_{1}=0$ the
Weyl tensor is Petrov type D if and only if
$3\Psi_{2}\Psi_{4}=2\Psi_{3}^2$ is satisfied.  Evidently, assuming
$\Psi_{1}\neq 0$ implicitly excludes Petrov type D.

In addition to a Weyl tensor of Petrov type II we also require
Riemann type II, which implies that the Ricci tensor is PP-type II
or less and the Weyl and Ricci aligned frames coincide but differ
from the kinematic frame.  An invariant characterization of a
PP-type II Ricci tensor is given by the following syzygy
\begin{equation}
r_{2}^2(4r_{1}^3-6r_{1}r_{3}+r_{2}^2)-r_{3}^2(3r_{1}^2-4r_{3})=0,    \label{rictypeii}
\end{equation}
\noindent which is expressed in terms of the Carminati-Zakhary
(CZ) Ricci invariants \cite{Carminati} (see definitions later).
Thus if Riemann type II, then in the kinematic frame (\ref{ijf})
and (\ref{rictypeii}) are satisfied and the alignment of the Weyl
and Ricci frames will impose further constraints relating the Weyl
and Ricci scalars through syzygies among the mixed invariants.  A
solution with the property that the Riemann type II aligned frame
does not coincide with the kinematic frame can be found by noting
that $\Psi_{1}\neq 0$ implies $\Phi_{01}\neq 0$; we can thus
regard (\ref{ijf}) and (\ref{rictypeii}) as polynomials in
$\Psi_{1}$, $\Phi_{01}$ and $\Phi_{10}$ with constraints given by
the vanishing of their coefficients.  We obtain
$\Psi_{3}=\Psi_{4}=0$ and $\Phi_{12}=\Phi_{22}=0$ from equations
(\ref{ijf}) and (\ref{rictypeii}), respectively. Consequently, the
Riemann tensor has vanishing negative boost weight components and
${\bf n}$ is the Riemann aligned null direction in the kinematic
frame.  If the null congruence defined by ${\bf n}$ is geodesic,
expansion-free, shear-free and twist-free, then by applying null
rotations in the following order 3) $b=i$, 2) $c=-i$ and 3) $b=i$
results in a NP tetrad ${\bl}'={\bf n}$, ${\bf n}'={\bl}$, ${\bf
m}'={\bf \overline{m}}$, ${\bf \overline{m}}'={\bf m}$. Thereby
interchanging ${\bl}$ and ${\bf n}$ the Riemann tensor has
vanishing positive  boost weight components which gives a Riemann
type II aligned null direction in the kinematic frame defined by
${\bl}'$.  To avoid this possibility, we require that at least one
of the kinematic scalars associated with the null congruence
defined by ${\bf n}$ is non-vanishing.  ${\bf n}$ is geodesic if
$\nu =0$ and affinely parameterized if $\gamma+\overline\gamma
=0$, shear-free if $\lambda=0$, expansion-free and twist-free if
the real and imaginary parts of $\mu$ vanish, respectively.
Therefore, $\nu$, $\lambda$, $\mu$ are the analogs of $\kappa$,
$\sigma$, $\rho$ for ${\bf n}$.  In the next subsection
\ref{subcases} we provide an example of this type of solution.

\subsubsection{Subcases:}\label{subcases}

There are a number of algebraically special (Riemann) subcases.
First, there is the algebraically special Riemann type II subcase,
in which there is a frame in which all positive  boost weight
components of the Riemann tensor are zero. Second, there is the
aligned subcase (of this subcase) in which the frame in which
positive boost weight terms are zero and the kinematic frame are
aligned.

This is a distinct subcase to the first subcase because there
exist Algebraically special Riemann type II Kundt spacetimes which
are not aligned (and occurs when the metric function $W_{,vv} = 0$
in the kinematic frame). For example, suppose that ${\bf k}$ is a
shear-free, geodesic Riemann type I null vector, and that ${\bf
n}$ is an aligned Riemann type II null vector. Suppose that ${\bf
k}$ and ${\bf n}$ are {\em not aligned} (i.e., subcase (1) but not
subcase (2)). Choose a null frame (${\bf k}$, ${\bf n}$, ${\bf
m_1}$,${\bf m_2}$); in this frame the Riemann tensor has no boost
weight +2  and no boost weight -1 and -2 terms. It is possible to
show there are non-trivial spacetimes in this class (satisfying
the Bianchi identities etc.), in which boost weight +1 terms are
non-zero (and the zero boost weight terms are non-zero) and ${\bf
n}$ is not geodesic, expansion-free, shear-free and twist-free.
Since the components of the Ricci tensor have positive boost
weight terms, this does not violate the Goldberg-Sachs theorem
\cite{kramer}.

\subsubsection{Example:}

We now give an example where the Riemann type II aligned frame
does not coincide with the kinematic frame, in the sense that
${\bf n}$ is the Riemann-aligned null direction having
non-vanishing expansion and shear.  We define the following NP
tetrad for the Kundt metric:
\begin{eqnarray}
{\bl}=du,\ {\bf n}=\left[H+\frac{1}{2}(W_{1}^{2}+W_{2}^2)\right]du+dv,\ {\bf m}=\frac{1}{\sqrt{2}}\left[(W_{1}-iW_{2})du-dx+idy\right]  \label{example3}
\end{eqnarray}
\noindent (where $g_{ij}=\delta_{ij}$ in this example) with metric functions
\begin{equation}
\begin{array}{ll}
\lefteqn{H=-\frac{1}{8}J^2v^4-\frac{1}{2}JF_{1}v^3-\frac{1}{2}\left[(F_{2}x+F_3)J+F_{1}^2\right]v^2
-\left[(F_{2}x+F_{3})F_{1} + F_{2} +\frac{F_{2}'}{F_{2}}\right]v} &  \\
& \mbox{} \displaystyle{-\frac{1}{2}\left[(F_2x+F_3)^2+F_{4}^{2}\right]+\frac{1}{J}\left[F_{1}'+2F_{1}F_{2}\right],} \\
W_{1}=\frac{1}{2}Jv^2 +F_1 v+F_2 x+F_3, & W_2 = F_4        \label{hwwexample3}
\end{array}
\end{equation}
\noindent where
\begin{equation}
J(u)=\exp\left[-\int\left(\frac{F_{2}'}{F_2}+3F_{2}\right)du \right]
\end{equation}
\noindent and the arbitrary functions $F_{1},\ldots,F_{4}$ only
depend on $u$. (We note that coordinate transformations can still
be used to simplify these metric functions).  ${\bl}$ has
$\kappa=\sigma=\rho=0$ and the only non-vanishing curvature
scalars are $\Psi_{1}$, $\Psi_{2}$, $\Phi_{01}$, $\Phi_{02}$,
$\Phi_{11}$ and $\Lambda$ (all negative boost weight  components
vanish) -- therefore (\ref{ijf}) and (\ref{rictypeii}) are
satisfied. Moreover, $\nu=0$. However, $\gamma +
\overline{\gamma}\neq 0$, so that ${\bf n}$ is geodesic but not
affinely parameterized. By performing a boost ${\bl}\rightarrow
A{\bl}$, ${\bf n}\rightarrow A^{-1}{\bf n}$ where
\begin{equation}
A(u)=J^{-1}\exp\left(-2\int F_{2} du\right),
\end{equation}
\noindent we can affinely parameterize ${\bf n}$ (while also
maintaining ${\bl}$ affinely parameterized).  As a result we find
that ${\bf n}$ is geodesic and twist-free; however,
$\lambda+\overline{\lambda}$ and $\mu+\overline{\mu}$ are non-zero
so that it is expanding and shearing.  We have shown that there
exists a Riemann type II aligned frame, with aligned null
direction ${\bf n}$,  that differs from the kinematic frame
determined by ${\bl}$.

Note that there are further subcases depending upon whether
$\nabla(Riem)$ is algebraically special and aligned. We shall discuss this
in detail in the next section.
Essentially, for the `zeroth order' degenerate Kundt spacetimes (with
$W_{,vv} = 0$),  in the kinematic frame   $\nabla(Riem)$ can have positive  boost weight terms.
In particular, the boost weight +1 component $R_{0101;1} \sim D\Phi_{11} \sim H_{,vvv}$
is non-zero in general. However, if
$H_{,vvv}= 0$, $\nabla(Riem)$ has no positive  boost weight terms in the kinematic frame
and the Kundt spacetime is `first order' degenerate.

%\newpage

\section{Degenerate Kundt spacetimes}

In the aligned-algebraically special Riemann type II subcase the
boost weight +1 terms of the Riemann tensor are also zero (in the
kinematic frame). In the analysis of \cite{inv} it was found that
a Kundt metric is $\mathcal{I}$-non-degenerate if the metric
functions $W(v,u,x)$ and $H(v,u,x)$ in the kinematic Kundt null
frame satisfy $W_{,vv} \neq 0$ and $H_{,vvv} \neq 0$. Hence the
exceptional spacetimes are the  {\it aligned algebraically special
Riemann type II Kundt spacetimes} and, from the fact that
$H_{,vvv} = 0$, in this frame $\nabla(Riem)$ (and, as we shall
show later, all covariant derivatives) do not have any positive
boost weight terms. In short (and consistent with the terminology
of the above theorem), we shall call such spacetimes {\it
degenerate} Kundt spacetimes, in which there exists a common null
frame in which the geodesic, expansion-free, shear-free and
twist-free null vector $\ell$ is also the null vector in which all
positive  boost weight terms of the Riemann tensor and its
covariant derivatives are zero (i.e., the kinematic Kundt frame
and the Riemann type II aligned null frame and the
$\nabla(Riemann)$ type II aligned null frame are all {\it
aligned}). The {\em degenerate} Kundt spacetimes are the only
spacetimes in 4D that are {\em not} $\mathcal{I}$-non-degenerate,
and their metrics are the \emph{only metrics not determined by
their curvature invariants} \cite{inv}. We note that the important
CSI and VSI spacetimes are degenerate Kundt spacetimes. We also
note that the degenerate Kundt spacetimes are the original 4D
spacetimes satisfying the energy conditions and for particular
physically motivated energy-momentum tensor (e.g.,
Einstein-Maxwell fields, with a restricted Ricci type satisfying
$W_{,vv} = 0$ and $H_{,vvv} = 0$) \cite{kramer} studied by Kundt
\cite{Kundt}

\subsection{Analysis}
Consider the 4D Kundt metric (\ref{Kundt}) and a kinematically
aligned NP tetrad, so that ${\bl}=du$ has
$\kappa=\sigma=\rho=\epsilon=0$.  In order to delineate among the
Kundt metrics the ones that are $\mathcal{I}$-non-degenerate we
make the following distinction:

\begin{defn}
Let $K_{n}$ denote the subclass of Kundt metrics such that there
exists a kinematic frame in which Riemann up to and including its
$n^{th}$ covariant derivative have vanishing positive  boost
weight  components.  We call $K_{n}$ the $n^{th}$ order degenerate
Kundt class.
\end{defn}

\noindent Therefore, for every metric in $K_{n}$ there exists an
NP tetrad in which $\kappa=\sigma=\rho=\epsilon=0$ and
$R_{abcd},\ldots,\nabla^{(n)}R_{abcd}$ are type II or less.  In
general the Kundt metric (\ref{Kundt}) is Riemann type I; imposing
$K_{0}$ is equivalent to $\Psi_{1}=\Phi_{01}=0$, which in terms of
metric functions gives $W_{i,vv}=0$.  It now follows that $K_{0}$
does not imply $K_{1}$ since $\nabla R_{abcd}$ has a boost weight
+1 component linear in the scalars $D\Psi_{2}$, $D\Phi_{11}$ and
$D\Lambda$.  Requiring $K_{1}$ is equivalent to $H_{,vvv}=0$ (in
addition to $W_{i,vv}=0$) and consequently
$D\Psi_{2}=D\Phi_{11}=D\Lambda=0$.  A direct calculation now shows
that $K_{1}$ implies $K_{2}$, leading us to the following result
(the converse is trivial by definition of $K_n$). {\footnote{ A
similar theorem is valid in arbitrary dimensions -- see later. }}

\begin{thm}\label{thmK2}
In the Kundt class, $K_{1}$ implies $K_{n}$ for all $n\geq 2$.
\end{thm}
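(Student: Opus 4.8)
The plan is to prove Theorem~\ref{thmK2} by an inductive bootstrap argument: I will show that once the curvature and its first derivative have been forced into type~II (i.e.\ $K_1$ holds), the special algebraic structure of the Kundt metric propagates this property to all higher covariant derivatives automatically. The key structural fact I would exploit is the boost-weight grading. In the kinematic frame \eqref{kinframe}, assign to $\el$ boost weight $+1$ and to $\en$ boost weight $-1$, so that each frame component of a tensor carries a definite boost weight. The crucial observation is that the only connection coefficient that can \emph{raise} boost weight when one takes a covariant derivative is the one associated with the $v$-derivative $D=\ell^a\nabla_a$; all the other directional derivatives $\Delta,\delta,\bar\delta$ and the surviving connection coefficients $\beta,\tau,\pi,\mu,\lambda,\nu,\gamma,\alpha$ either preserve or lower boost weight (this is exactly why, in the preceding discussion, the positive boost weight $+1$ piece of $\nabla(Riem)$ was seen to be controlled by $D\Psi_2,D\Phi_{11},D\Lambda\sim H_{,vvv}$). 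Therefore the entire question of whether positive boost weight terms can appear in $\nabla^{(n)}(Riem)$ reduces to controlling repeated applications of the operator $D$.

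The first step is to translate the hypothesis $K_1$ into metric-function language. As noted in the text, $K_0$ is equivalent to $W_{i,vv}=0$ and $K_1$ is equivalent to the additional condition $H_{,vvv}=0$. Since $P$ is already $v$-independent ($\partial_v P=0$), the full set of metric functions $H,W_i,g_{ij}$ then depends on $v$ \emph{at most quadratically} (for $H$) and \emph{at most linearly} (for $W_i$), with $g_{ij}$ independent of $v$. The second step is to observe that $D=\partial_v$ acting on any curvature scalar in this frame therefore terminates quickly: because the metric components are low-degree polynomials in $v$, the scalars $\Psi_A,\Phi_{AB},\Lambda$ are themselves polynomials in $v$ of bounded degree, and the positive boost weight content of $\nabla^{(n)}(Riem)$ is built from iterated $D$-derivatives of these scalars together with lower-weight connection terms. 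Concretely, I would show that under $K_1$ one already has $D\Psi_2=D\Phi_{11}=D\Lambda=0$, and that these are precisely the quantities whose non-vanishing would seed a positive boost weight term at the next order.

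The heart of the argument is then the inductive step. Assuming $\nabla^{(n)}(Riem)$ has no positive boost weight components, I would examine the decomposition of $\nabla^{(n+1)}(Riem)=\nabla\bigl(\nabla^{(n)}(Riem)\bigr)$ into boost-weight sectors and show that the only mechanism capable of producing a $(+)$-weight component at order $n+1$ is the action of $D$ on a zero-boost-weight component of $\nabla^{(n)}(Riem)$, multiplied by a boost-raising structure that is absent in the kinematic Kundt frame; all genuinely boost-raising contributions are proportional to $v$-derivatives of the already-annihilated quantities $\Psi_2,\Phi_{11},\Lambda$ (equivalently to $H_{,vvv}$ and $W_{i,vv}$), which vanish by $K_1$. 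The base case $n=1\Rightarrow n=2$ is exactly the direct calculation the authors report, so the scheme is: verify the base case via the NP/Bianchi machinery, then close the induction with the boost-weight bookkeeping above.

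The main obstacle I anticipate is making the inductive step fully rigorous rather than calculational: one must argue in complete generality that \emph{no} combination of the surviving (non-affine) connection coefficients $\tau,\pi,\mu,\lambda,\nu,\gamma,\beta,\alpha$ can conspire to raise boost weight when differentiating a zero-weight object, and that the $D$-operator is the sole culprit. This requires a clean lemma stating the boost-weight behaviour of each term in the covariant-derivative operator acting in the kinematic frame, which is really a statement about the boost weights of the Ricci rotation coefficients in \eqref{Rcoefs}; once that lemma is in hand, the propagation $K_1\Rightarrow K_n$ for all $n\ge2$ follows by a uniform induction without any further metric-function computation.
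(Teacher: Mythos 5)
Your overall strategy --- boost-weight bookkeeping in the kinematic frame plus induction on the order of differentiation --- is exactly the paper's, and your two structural observations (that $\kappa=\sigma=\rho=\epsilon=0$ forces the covariant derivatives of the tetrad vectors to produce only components of equal or lower boost weight, and that consequently the only possible source of a boost weight $+1$ component of $\nabla^{(n+1)}(Riem)$ is $D=\partial_v$ acting on the weight-zero sector $(\nabla^{n}R)_{0}$) are both correct and form the backbone of the published argument.

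The gap is in how you propose to close the induction. Your induction hypothesis is only that $\nabla^{(n)}(Riem)$ has no positive boost weight components, and from that alone you cannot conclude $D\bigl((\nabla^{n}R)_{0}\bigr)=0$: the weight-zero components of $\nabla^{(n)}R$ are not the original scalars $\Psi_2,\Phi_{11},\Lambda$ but new order-$n$ quantities, and your remark that everything is a polynomial in $v$ of bounded degree does not help, since a degree-one polynomial survives an application of $D$. What the paper does --- and what you would have to add --- is to strengthen the induction hypothesis to a statement about functional dependence: $(\nabla^{n}R)_{0}$ involves $W_i$ only through $W_{i,v\cdots}$ and $H$ only through $H_{,vv\cdots}$ (so that one more $D$ annihilates it via the $K_1$ conditions $D^2W_i=D^3H=0$), \emph{and} $(\nabla^{n}R)_{-1}$ involves $H$ only through $H_{,v\cdots}$. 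The second clause is essential and absent from your plan: $(\nabla^{n+1}R)_{0}$ receives a contribution from $D$ acting on $(\nabla^{n}R)_{-1}$, so without controlling the $H$-dependence of the weight $-1$ sector you cannot propagate the weight-zero property to order $n+1$, and the induction stalls after the first step. Verifying the strengthened hypothesis at each order also requires checking the contributions of $\Delta$, $\delta$, $\bar\delta$ and of the surviving spin coefficients, which is the bulk of the paper's proof. An alternative that avoids this metric-function bookkeeping entirely --- used by the paper for the higher-dimensional version of the theorem in section 7.3.2 --- is to invoke the Bianchi identity and the generalized Ricci identity to rewrite every candidate positive boost weight component of $\nabla^{(n+1)}R$ in terms of derivatives of components already known to vanish.
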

\begin{proof}
In 4 dimensions the transverse metric of (\ref{Kundt}) is
conformally flat so that $g_{ij}dx^i dx^j = -P^{-2}(dx^2+dy^2)$,
where $P=P(u,x,y)$ \cite{kramer}.  We choose an NP tetrad of the
form
\begin{eqnarray}
{\bl}=du,\ {\bf n}=\left[H+\frac{P^2}{2}(W_{1}^{2}+W_{2}^2)\right]du+dv,\ {\bf m}=\frac{P}{\sqrt{2}}\left[(W_{1}-iW_{2})du-\frac{(dx-i dy)}{P^2}\right],   \label{npframe2}
\end{eqnarray}
\noindent and impose $K_1$ to find that the boost weight 0
components of $R_{abcd;e}$ all contain $W_{i}$ with at least one
derivative with respect to $v$, and $H$ with at least two
derivatives with respect to $v$.  In addition, the boost weight -1
components have $H$ appearing with at least one derivative with
respect to $v$.  It now follows by calculation that $R_{abcd;ef}$
satisfies $K_2$ (positive  boost weight components vanish) and its
boost weight 0 and -1 components have the same dependence as
$R_{abcd;e}$ with respect to the derivatives of $W$ and $H$ with
respect to $v$.  Proceeding by induction, we consider the $n^{th}$
covariant derivative of Riemann and denote its components of boost
weight $b$ by $(\nabla^{n}R)_{b}$.  Assume that $K_n$ holds with
the property that the occurrence of $W_{i}$ and $H$ in the boost
weight 0 components, $(\nabla^{n}R)_{0}$, only depends
\footnote{Although the components also depend on $P(u,x,y)$, this
is of no consequence since \mbox{$DP=0$} and hence $P$ will not
appear in a boost weight +1 component.} on $W_{i,v\cdots}$ and
$H_{,vv\cdots}$ (where $\cdots$ refers to derivatives with respect
$u$, $x$ or $y$) and $(\nabla^{n}R)_{-1}$ components have a
dependence on $H_{,v\cdots}$ but not $H$.

Before considering the $n+1$ derivative of Riemann, we note that
since $\kappa=\sigma=\rho=\epsilon=0$ for (\ref{npframe2}) then
the covariant derivatives of the NP tetrad, sorted according to
decreasing boost weights, are
\begin{eqnarray}
\ell_{a;e} & = &  -(\bar{\tau}m_{a}+\tau\bar{m}_{a})\ell_{e}-(\bar{\alpha}+\beta)\ell_{a}\bar{m}_{e}- (\alpha+\bar{\beta})\ell_{a}m_{e}+ (\gamma+\bar{\gamma})\ell_{a}\ell_{e}\, , \hspace{1cm}  \label{npcdl}\\
n_{a;e} &=& (\pi m_{a}+\bar{\pi}\bar{m}_{a})n_{e} + (\bar{\alpha}+\beta)n_{a}\bar{m}_{e}+ (\alpha+\bar{\beta})n_{a}m_{e}-(\gamma+\bar{\gamma})n_{a}\ell_{e}   \nonumber \\
& & \mbox{} - (\mu m_{a}+\bar{\lambda}\bar{m}_{a})\bar{m}_{e} - (\bar{\mu}\bar{m}_{a}+\lambda m_{a})m_{e} + (\nu m_{a}+ \bar{\nu}\bar{m}_{a})\ell_{e}\, ,  \label{npcdn}\\
m_{a;e} &=& \bar{\pi}\ell_{a}n_{e}-\tau n_{a}\ell_{e}+ (\bar{\alpha}-\beta)m_{a}\bar{m}_{e}- (\alpha-\bar{\beta})m_{a}m_{e}   \nonumber \\
& & \mbox{}+(\gamma-\bar{\gamma})m_{a}\ell_{e}-\bar{\lambda}\ell_{a}\bar{m}_{e}-\bar{\mu}\ell_{a}m_{e}+ \bar{\nu}\ell_{a}\ell_{e}\, .   \label{npcdm}
\end{eqnarray}
\noindent Therefore, the covariant derivative of \emph{any} outer
product of NP tetrad vectors gives components of equal or lesser
boost weight.  Suppose an outer product has boost weight $b$; then
applying a covariant derivative gives components of boost weight
$b$ depending on $\{\tau,\pi,\alpha,\beta\}$, $b-1$ depending on
$\{\gamma,\mu,\lambda\}$ and $b-2$ depending on $\{\nu\}$, along
with their complex conjugates.

Consider the $n+1$ covariant derivative of Riemann. Then the boost
weight +1 components, $(\nabla^{n+1}R)_{+1}$, can only arise by
applying $D=\partial_{v}$ to the components of
$(\nabla^{n}R)_{0}$.  By hypothesis $(\nabla^{n}R)_{0}$ only
depends on $W_{i,v\cdots}$ and $H_{,vv\cdots}$; therefore, from
the $K_1$ conditions, $D^{2}W_{i}=D^{3}H=0$,
$(\nabla^{n+1}R)_{+1}$ vanishes whereby we obtain that the $n+1$
covariant derivative of Riemann is type II.

Furthermore, the boost weight 0 components of
$(\nabla^{n+1}R)_{0}$ arise from applying $\delta$ or
$\overline{\delta}$ to the components of $(\nabla^{n}R)_{0}$.  By
assumption $(\nabla^{n}R)_{0}$ has the property that they only
depend on $W_{i,v\cdots}$ and $H_{,vv\cdots}$. Since
$\delta=P(\partial_{x}-i\partial_{y})/\sqrt{2}$ then
$(\nabla^{n+1}R)_{0}$ will also depend on $W_{i,v\cdots}$ and
$H_{,vv\cdots}$\, .  In addition, $(\nabla^{n+1}R)_{0}$ also arise
from $(\nabla^{n}R)_{0}$ by taking the covariant derivative of the
outer product of NP tetrad vectors whose total boost weight is
zero.  From (\ref{npcdl})--(\ref{npcdm}) we showed that these
contributions will have the form of $\tau$, $\pi$, $\alpha$ or
$\beta$ multiplied by the components of $(\nabla^{n}R)_{0}$. Since
these spin coefficients only depend on $W_{i,v}$ (and $P$) it
follows that the components of $(\nabla^{n+1}R)_{0}$ depend on
$W_{i,v\cdots}$ and $H_{,vv\cdots}$\, .  Lastly,
$(\nabla^{n+1}R)_{0}$ also have contributions from applying $D=
\partial_{v}$ to $(\nabla^{n}R)_{-1}$.  Since it is assumed that $(\nabla^{n}R)_{-1}$ does not
depend on $H$ (i.e.,  H appears with at least one derivative with respect to $v$), then we
conclude that $(\nabla^{n+1}R)_{0}$ will only depend on $W_{i,v\cdots}$ and $H_{,vv\cdots}$\, .

Next consider the boost weight -1 components
$(\nabla^{n+1}R)_{-1}$ which arise from either applying $\delta$
or $\overline{\delta}$ to $(\nabla^{n}R)_{-1}$ or as products of
$\tau$, $\pi$, $\alpha$ or $\beta$ with $(\nabla^{n}R)_{-1}$.
Since $(\nabla^{n}R)_{-1}$ has $H$ appearing with at least one
derivative with respect to $v$, then from the same argument used
above in the case of $(\nabla^{n+1}R)_{0}$ we have that
$(\nabla^{n+1}R)_{-1}$ will also have the same $H_{,v\cdots}$
dependence (no dependence on $H$).  Moreover,
$(\nabla^{n+1}R)_{-1}$ also arises from applying $D$ to
$(\nabla^{n}R)_{-2}$; however, no matter what functional
dependence we have in $(\nabla^{n}R)_{-2}$ we again obtain that
$H$ appears with at least one derivative with respect to $v$ in
$(\nabla^{n+1}R)_{-1}$.  Finally, we have contributions to
$(\nabla^{n+1}R)_{-1}$ occurring through the application of
$\Delta$ to $(\nabla^{n}R)_{0}$ or as products of $\gamma$, $\mu$
or $\lambda$ with $(\nabla^{n}R)_{0}$.  In the first case, since
\begin{equation}
\Delta=\partial_{u}-\left[H+\frac{P^2}{2}(W_{1}^{2}+W_{2}^{2}) \right]\partial_{v}+P^2W_{1}\partial_{x} +P^2W_{2}\partial_{y}
\end{equation}
\noindent then when applied to $(\nabla^{n}R)_{0}$ we have, by
virtue of the $K_{1}$ conditions, that $\partial_{v}$ on the
components of $(\nabla^{n}R)_{0}$ vanish.  Consequently, by
hypothesis on $(\nabla^{n}R)_{0}$ then $(\nabla^{n+1}R)_{-1}$ has
$H$ appearing with at least one derivative of $v$.  In the second
case we note that $\{\gamma,\mu,\lambda\}$ only has $H$ appearing
with one derivative of $v$ in $\gamma$, and it follows that the
product of these spin coefficients with $(\nabla^{n}R)_{0}$ will
give rise to $H$ appearing with at least one derivative of $v$ in
$(\nabla^{n+1}R)_{-1}$.

We can now conclude that the $n+1$ covariant derivative of Riemann
is type II, the boost weight  0 components, $(\nabla^{n+1}R)_{0}$,
only depend on $W_{i,v\cdots}$ and $H_{,vv\cdots}$, and the boost
weight -1 components, $(\nabla^{n+1}R)_{-1}$, contain $H$ with at
least one derivative with respect to $v$; therefore, $K_{n+1}$
holds.
\end{proof}

We note that the VSI (or proper CSI) spacetimes satisfy $K_{n}$
and have vanishing (or constant) boost weight 0 components for all
orders $n$.  This result partially characterizes the metrics not
determined by their scalar polynomial curvature invariants, namely
all $\mathcal{I}$-degenerate metrics must satisfy $K_{1}$.  We
note that there may exist cases where Riemann and all of its
covariant derivatives are type D (and hence $K_n$ for all $n$) but
a sufficient number of independent curvature invariants can be
constructed such that Riemann and its derivatives can be
determined in some sense. We return to this in the next
subsection.

We also note that the Goldberg-Sachs theorem (GS: theorem 7.1 in
\cite{kramer}) states that a spacetime with a shear-free, geodesic
null congruence ${\bf k}$ ($\kappa=\sigma=0$) satisfying
$R_{ab}k^ak^b=0$, $R_{ab}k^am^b=0$, $R_{ab}m^am^b=0$ (what we
might call a spacetime admitting an aligned algebraically special
Ricci tensor), necessarily has $\Psi_0=\Psi_1=0$ (aligned
algebraically special Weyl tensor). When applied to the Kundt
class, GS implies $K_{0}$. However, the conditions of GS on the
Ricci tensor are slightly stronger than what are required for
$K_{0}$ to be satisfied; GS also imposes
$R_{ab}m^{a}m^{b}=\Phi_{02}=0$.  Based on the known solutions
\cite{kramer} of the Kundt class satisfying GS, these conditions
and $\rho=0$ imply $K_{0}$ which in turn implies $K_{n}$ for all
$n\geq 1$.

%\newpage

\section{Classification of degenerate Kundt spacetimes}

The degenerate Kundt spacetimes are classified algebraically by their
Riemann type (II,III,N,D or O) in the aligned kinematic frame, or more finely by
their Ricci and Weyl types separately. We are only interested here in types II and D, since otherwise
the degenerate Kundt spacetime is VSI \cite{pravda}. Within each algebraic type, it may also be useful to
classify the covariant derivative(s) of Riemann tensor (and particularly
$\nabla(Riem)$ and $\nabla^2(Riem)$) in terms of their algebraic types.

In the analysis below it is important to fix the frame in each
algebraic class completely. We shall begin with the classification
of $\nabla(Riem)$.

\subsubsection{Classification of $\nabla(Riem)$ for Riemann type II}

The degenerate Kundt spacetimes of proper Riemann type II (i.e.,
with some non-zero boost weight zero terms but not type D) are
further classified in the aligned kinematic frame by the algebraic
type of $\nabla(Riem)$ (and, for example, $\nabla^2(Riem)$). In
general, $\nabla(Riem)$ in a degenerate Kundt spacetime is of type
(II,G) {\footnote{The notation is consistent with that of
\cite{pravda} and \cite{class}. Since, by definition,
$\nabla(Riem)$ has no positive boost weight terms, its principal
type is II or more special. In general its secondary type is G,
but if there are no boost weight -3 (-2) terms it is of type H
(I), etc. If $\nabla(Riem)$ is of type (II,II), and consequently
has only boost weight zero terms, it is said to be of type D.}}.
In Table \ref{algtypestable} we list the possible algebraic types
of $\nabla(Riem)$ corresponding to the vanishing of the
appropriate boost weight  components in the degenerate Kundt
class. Further subcases consist of types (II,H), (II,I), (II,II),
and (III/N,H), (III/N,I) and (III/N,II), etc. Further subclasses
exist, including those where certain contractions of
$\nabla^{n}(Riem)$ are separately classified algebraically. The
conditions for each of these subclasses can be presented in a
similar manner to those of the next section.

\begin{table}[h]
\begin{tabular}{c|ccccccccccc}
bw  & (II,G) & (II,H) & (II,I) & D      & (III,G) & (III,H) & (III,I) &
(N,G)  & (N,H)  & (O,G)  & O  \\ \hline
0   & $\ast$ & $\ast$ & $\ast$ & $\ast$ & 0       & 0       &   0     &
0    &   0    &   0    & 0  \\
-1  & $\ast$ & $\ast$ & $\ast$ & 0      & $\ast$  & $\ast$  & $\ast$  &
0    &   0    &   0    & 0  \\
-2  & $\ast$ & $\ast$ &   0    & 0      & $\ast$  & $\ast$  &   0     &
$\ast$ & $\ast$ &   0    & 0  \\
-3  & $\ast$ &   0    &   0    & 0      & $\ast$  &   0     &   0     &
$\ast$ &   0    & $\ast$ & 0  \\
\end{tabular}
\caption{Within the the degenerate Kundt class we list the
possible algebraic types of $\nabla(Riem)$ corresponding to the
vanishing of the appropriate boost weight (bw) components.  In
each of these types it is assumed that either there is no
remaining frame freedom or $Riem$ and $\nabla(Riem)$ have an
isotropy group consisting of 2-dimensional null rotations about
$\ell$ (or subgroup thereof). A simplified notation, as defined in
the text, is often used; for example,  D:=(II,II), III:=(III,G) or
(III,H) and N:=(N,G) or(N,H).} \label{algtypestable}
\end{table}

\subsubsection{Classification of $\nabla(Riem)$ type III, N or O}

Of particular interest, perhaps, is the case where $\nabla(Riem)$ (and $\nabla^{n}(Riem)$)
are of type III, N or O, and hence all terms in $\nabla(Riem)$ ($\nabla^{n}(Riem)$) are of
negative boost weight and hence do not contribute to any scalar invariants containing
covariant derivative(s) of the Riemann tensor (i.e., the only non-vanishing polynomial
scalar invariants are the zeroth order ones constructed from the Riemann tensor alone).

Let us present the calculation of Riemann type II and
$\nabla(Riem)$ of type (III,G) or more specialized. We begin by
assuming there exists an NP tetrad in which Riemann is type II
(therefore $\Psi_{0}=\Psi_{1}=\Phi_{00}=\Phi_{01}=0$) and
simultaneously in this frame $\kappa=\sigma=\rho=\epsilon=0$.
Recall that this is the definition of a zeroth order degenerate
Kundt spacetime which we denoted as $K_0$.  The remaining frame
freedom consists of a 2-dimensional group of null rotations about
$\ell$, to be used later to simplify the Weyl tensor and thus
completely fix the frame.  Calculating $\nabla(Riem)$ and
employing the Bianchi identities results in components of boost
weight  $\leq +1$. Since this case considers $\nabla(Riem)$ of
type III or less we require the vanishing of boost weight  +1 and
0 components in this frame.  All components of $\nabla(Riem)$ of
boost weight  +1 reduce to a constant multiple of $DR$; hence a
necessary and sufficient condition for $\nabla(Riem)$ to be type
II is $DR=0$.  The Bianchi equations provide further conditions
showing that $D$ of all boost weight  0 components of $Riem$
vanish:

\noindent $(\nabla R)_{+1}=0$:
\begin{equation}
D\Psi_{2}=D\Phi_{11}=D\Phi_{02}=DR=0 \, .   \label{Dbw01ii}
\end{equation}
\noindent Note that the boost weight  +1 components of $\nabla(Riem)$ are invariant with
respect to null rotations about $\ell$; therefore, the conditions (\ref{Dbw01ii}) are
invariant conditions with respect to this remaining frame freedom.
Assuming boost weight  +1 components of $\nabla(Riem)$ vanish, the necessary and
sufficient conditions for the boost weight  0 components to vanish are then:

\noindent $(\nabla R)_{0}=0$:
\begin{eqnarray}
& \tau\Psi_{2}=0 &  \label{bw00iii}  \\
& 2\tau\Phi_{11}+\bar{\tau}\Phi_{02}=0 &    \label{bw01iii}  \\
& D\Psi_{3}=3\pi\Psi_{2},\quad D\Phi_{12}=2\bar{\pi}\Phi_{11}+\pi\Phi_{02}
&  \label{bw02iii}  \\
& \delta R=0 &  \label{bw03iii}  \\
& \bar{\delta}\Phi_{02}=2(\alpha-\bar{\beta})\Phi_{02},\quad
\delta\Phi_{02}=-2(\bar{\alpha}-\beta)\Phi_{02} \, .& \label{bw04iii}
\end{eqnarray}
\noindent Again, (\ref{bw00iii})--(\ref{bw04iii}) follow after
(\ref{Dbw01ii}) has been imposed.  As above, if the boost weight  +1 components vanish,
then the boost weight  0 components are invariant with respect to null rotations
about $\ell$; hence (\ref{bw00iii})--(\ref{bw04iii}) provide invariant
conditions for the vanishing of the boost weight  0 components.  The remaining
non-vanishing components of $\nabla(Riem)$ have boost weights  -1, -2 and -3.
Therefore, in a $K_0$ spacetime, $\nabla(Riem)$ is of type (III,G) if and only
if equations (\ref{Dbw01ii}) and (\ref{bw00iii})--(\ref{bw04iii}) hold.
Consequently, the Bianchi equations give some additional useful relations:
$\delta\Psi_{2}=\bar{\delta}\Psi_{2}=\delta\Phi_{11}=0$.

We now specialize by assuming that Weyl is proper type II; i.e.,  $\Psi_{2} \neq 0$  and
the type does not
reduce to another more algebraically special type.  Using a
null rotation about $\ell$ to set $\Psi_{3}=0$, the frame is fixed with
$\Psi_{2}$ and $\Psi_{4}$ nonzero.  If $\Psi_{3}$ was zero initially, then
the frame is already fixed since any non-trivial null rotation about
$\ell$ results in a nonzero $\Psi_{3}$.  If under the null rotation about
$\ell$ both $\Psi_{3}$ and $\Psi_{4}$ become zero then we obtain Weyl type
D, a case which is excluded since we have assumed proper type II.  In this
fixed NP tetrad, equations (\ref{Dbw01ii}) and
(\ref{bw00iii})--(\ref{bw04iii}) simplify, giving a number of cases.
Here, we consider one of these:

\noindent \emph{Case 1}: $\tau=0$, $\Psi_{2}\neq 0$, $\Psi_{3}=0$,
$\Psi_{4}\neq 0$.  $\nabla(Riem)$ is type (III,G) if and only if
\begin{eqnarray}
& \tau=\pi=0 &  \label{t001}  \\
& \Psi_{2}-\displaystyle{\frac{R}{12}}=\Phi_{02}=0 &   \label{t002}  \\
& D\Psi_{2}=D\Phi_{11}=D\Phi_{12}=DR=0 &    \label{t003}  \\
& \delta\Psi_{2}=\delta\Phi_{11}=\delta R =0\, . &  \label{t004}
\end{eqnarray}
\noindent Note that the first equation of (\ref{t002}) implies
that $\Psi_{2}$ is real and, in addition,  since $\Psi_{2}\neq 0$
in this case, $R \neq 0$.  Further algebraically special types of
$\nabla(Riem)$ assume (\ref{t001})--(\ref{t004}) are satisfied.
Implementing the NP and Bianchi equations, we consider the
vanishing of $\nabla(Riem)$ components at each boost order
separately.

\noindent $(\nabla R)_{-1}=0$:
\begin{eqnarray}
& \mu=\lambda=0 &  \label{bwm10iii}  \\
& D\Psi_{4}=D\Phi_{22}=0 &    \label{bwm11iii}  \\
& \Delta R=0 &  \label{bwm12iii}  \\
& D\nu=\Phi_{12}=0 &  \label{bwm13iii}  \\
& R,\, \Psi_{2},\, \Phi_{11} \text{ are constants.} & \label{bwm14iii}
\end{eqnarray}
\noindent In (\ref{bwm13iii}) and (\ref{bwm14iii}) we list some of
the consequences of the NP and Bianchi equations. In particular,
(\ref{bwm14iii}) implies that all zeroth order scalar polynomial
invariants of the Riemann tensor are constant.

\noindent $(\nabla R)_{-2}=0$:
\begin{eqnarray}
& \nu=0 &  \label{bwm20iii}  \\
& \delta\Phi_{22}=-2(\bar{\alpha}+\beta)\Phi_{22} &    \label{bwm21iii}  \\
& \bar{\delta}\Psi_{4}=-4\alpha\Psi_{4},\quad
\delta\Psi_{4}=-4\beta\Psi_{4} &  \label{bwm22iii}  \\
& \Phi_{12}=0 &  \label{bwm23iii}
\end{eqnarray}

\noindent $(\nabla R)_{-3}=0$:
\begin{eqnarray}
&
\Delta\Phi_{22}=-2(\gamma+\bar{\gamma})\Phi_{22}+2\nu\Phi_{12}+2\bar{\nu}\Phi_{21}
& \label{bwm31iii} \\
& \Delta\Psi_{4}=-4\gamma\Psi_{4} & \label{bwm32iii}
\end{eqnarray}

\noindent Therefore, in this Case 1, $\nabla(Riem)$ is at most
type (III,G) and is of more special type if the following
conditions are satisfied:
\begin{itemize}
\item[(III,H) :] (\ref{bwm31iii}), (\ref{bwm32iii})
\item[(III,I) :] (\ref{bwm31iii}), (\ref{bwm32iii}) and
(\ref{bwm20iii})--(\ref{bwm23iii}) ; here (\ref{bwm31iii}) simplifies to
$\Delta\Phi_{22}=-2(\gamma+\bar{\gamma})\Phi_{22}$.
\item[(N,G) :] (\ref{bwm10iii})--(\ref{bwm14iii})
\item[(N,H) :] (\ref{bwm10iii})--(\ref{bwm14iii}) and (\ref{bwm31iii}),
(\ref{bwm32iii}) ; here (\ref{bwm31iii}) simplifies to
$\Delta\Phi_{22}=-2(\gamma+\bar{\gamma})\Phi_{22}$.
\end{itemize}
\noindent  For algebraic type (O,G), where
(\ref{bwm10iii})--(\ref{bwm23iii}) hold, it follows that $\nu=\mu=\lambda=0$.
Therefore, in addition to $\Phi_{12}=0$, the NP equations give
$\Psi_{4}=\Phi_{22}=0$; however, in Case 1 $\Psi_{4}\neq 0$ and thus type (O,G)
is excluded.  By a similar argument type O is also excluded; i.e., a
symmetric space cannot exist in Case 1.

Since $\nabla(Riem)$ is at most type (III,G) it has only negative boost
weight components; therefore, all first order scalar polynomial invariants
vanish.  Taking another covariant derivative we observe that the boost weight  0
components of $\nabla^{(2)}(Riem)$ can only arise from taking $D$ of the
boost weight  -1 components of $\nabla(Riem)$.  After simplifying we notice that
every boost weight  -1 component of $\nabla(Riem)$ is a sum of the terms $D\Psi_{4}$,
$D\Phi_{22}$, $\Delta R$, $\mu R$ or $\lambda R$.  Using the NP equations
\begin{equation}
D\lambda=D\mu=D\alpha=D\beta =0,  \label{NPeqnszero}
\end{equation}
and (\ref{t003}), we have that $D(\mu R)=D(\lambda R)=0$.  In addition, consider
the following Bianchi equations and commutation relations:
\begin{eqnarray}
& D\Psi_{4}=\bar{\delta}\Phi_{21}-\displaystyle{\frac{\lambda
R}{4}}-2\lambda\Phi_{11}+2\alpha\Phi_{21} & \label{bianDps4} \\
&
D\Phi_{22}=\bar{\delta}\Phi_{12}-\displaystyle{\frac{\bar{\mu}R}{4}}-2\bar{\mu}\Phi_{11}+2\bar{\beta}\Phi_{12}
& \label{bianDph22} \\
& [\Delta,D]=(\gamma+\bar{\gamma})D,\quad
[\bar{\delta},D]=(\alpha+\bar{\beta})D \, . &  \label{commDeldelD}
\end{eqnarray}
\noindent The first equation of (\ref{commDeldelD}) and equation (\ref{t003}) give
$D\Delta R =0$.  Next, we take $D$ of (\ref{bianDps4}) and (\ref{bianDph22})
and apply the second equation of (\ref{commDeldelD}), (\ref{NPeqnszero})
and (\ref{t003}) to obtain $D^{2}\Psi_{4}=D^{2}\Phi_{22}=0$.  Therefore all
boost weight  0 components of $\nabla^{(2)}(Riem)$ vanish, and the algebraic type
is at most (III,G) and all second order scalar polynomial invariants
vanish.  We expect that higher order covariant derivatives of the Riemann
tensor will also have vanishing boost weight  0 components so that
$\nabla^{(n)}(Riem)$ is type (III,G) for all $n\geq 1$, thus giving an
$\mathcal{I}$-symmetric space \cite{inv}.

\subsection{Classification of $\nabla(Riem)$: type D}

In order to perform an algebraic classification of $\nabla(Riem)$
within the degenerate Kundt class we shall begin by choosing to
use the NP tetrad where the kinematic frame and the Riemann
aligned frame coincide.  In particular,
$\Psi_{0}=\Psi_{1}=\Phi_{00}=\Phi_{01}=0$ so that Riemann is type
II or less and in this frame $\kappa=\sigma=\rho=\epsilon=0$.
Moreover, since degenerate Kundt has $\nabla^{(n)}(Riem)$ type
(II,G) or less for all $n\geq 0$, one consequence was found to be
$D\Psi_{2}=D\Phi_{11}=DR=0$.  By fixing $\epsilon=0$ using a
spin-boost then the only remaining freedom preserving the
kinematic and Riemann aligned frames are null rotations about
$\ell$.  The classification of $\nabla(Riem)$ will, in general, be
subject only to this 2-dimensional group of null rotations.

\subsubsection{Riemann type D}

The only non-vanishing curvature scalars are $\Psi_{2}$,
$\Phi_{11}$, $\Phi_{02}$ and $R$, and hence Riemann has only boost
weight 0 components.  Since any further null rotations about
$\ell$ do not preserve the Riemann type D aligned frame, the frame
is thus completely fixed and the classification of $\nabla(Riem)$
will depend on the vanishing of its boost weight components at
various orders. For Riemann type D we find that, in general,
$\nabla(Riem)$ is type (II,H) (boost weight -3 vanish) and must
satisfy the Bianchi equations
\begin{eqnarray}
0 & = & 3\nu\Psi_{2}-2\nu\Phi_{11}-\bar{\nu}\Phi_{20}   \label{bianbm21} \\
\Delta\Psi_{2}-\frac{1}{12}\Delta R & = & -3\mu\Psi_{2}-2\mu\Phi_{11}-\bar{\lambda}\Phi_{20}    \label{bianbm11} \\
\Delta\Phi_{20} & = & -3\lambda\Psi_{2}-2\lambda\Phi_{11}-\bar{\mu}\Phi_{20}-2(\gamma-\bar{\gamma})\Phi_{20} \label{bianbm12} \\
\Delta\Phi_{11}-\frac{1}{8}\Delta R & = & -2(\mu+\bar{\mu})\Phi_{11}-\lambda\Phi_{02}-\bar{\lambda}\Phi_{20} \label{bianbm13} \\
\delta\Psi_{2}-\bar{\delta}\Phi_{02}-\frac{1}{12}\delta R & = & 3\tau\Psi_{2}-2\tau\Phi_{11}-(2\alpha+\bar{\tau}- 2\bar{\beta})\Phi_{02}          \label{bianb01} \\
\bar{\delta}\Psi_{2}-\delta\Phi_{20}-\frac{1}{12}\bar{\delta}R & = & -3\pi\Psi_{2}+2\pi\Phi_{11}-(2\bar{\alpha} -\bar{\pi}-2\beta)\Phi_{20}     \label{bianb02} \\
\delta\Phi_{11}+\bar{\delta}\Phi_{02}+\frac{1}{8}\delta R & = & 2(\tau-\bar{\pi})\Phi_{11}+(\bar{\tau}-\pi + 2\alpha -2\bar{\beta})\Phi_{02} \, .    \label{bianb03}
\end{eqnarray}
\noindent Calculating the components of $\nabla(Riem)$ shows that
the Bianchi equations (\ref{bianbm21}),
(\ref{bianbm11})--(\ref{bianbm13}) and
(\ref{bianb01})--(\ref{bianb03}) provide constraints for the boost
weight -2, -1 and 0 components, respectively.  Our analysis takes
into account the Bianchi equations; however, we shall not consider
the NP equations nor the commutation relations.  In principle,
these would have to be solved in order to obtain the class of
metrics corresponding to Riemann type D with a given algebraic
type of $\nabla(Riem)$.  As an example, using Bianchi equation
(\ref{bianbm21}) all boost weight -2 components of $\nabla(Riem)$
are proportional to $\nu\Psi_{2}$.  Solving for the vanishing of
all components of $\nabla(Riem)$ at each boost order separately
gives:

\noindent $(\nabla R)_{-2}=0$:
\begin{eqnarray}
& \nu\Psi_{2}=0 &  \label{nabrm21} \\
& 2\nu\Phi_{11}+\bar{\nu}\Phi_{20}=0 \, .&    \label{nabrm22}
\end{eqnarray}
\noindent $(\nabla R)_{-1}=0$:
\begin{eqnarray}
& \Delta\Psi_{2}=\Delta\Phi_{11}=\Delta R = 0  &  \label{nabrm11} \\
& \lambda\Psi_{2}=0,\quad \mu\Psi_{2}=0    &  \label{nabrm12}  \\
& 2\bar{\mu}\Phi_{11}+\lambda\Phi_{02}=0,\quad 2\bar{\lambda}\Phi_{11}+\mu\Phi_{02}=0 & \label{nabrm13} \\
& \Delta\Phi_{02}=2(\gamma-\bar{\gamma})\Phi_{02} \, .&  \label{nabrm14}
\end{eqnarray}
\noindent $(\nabla R)_{0}=0$:
\begin{eqnarray}
& \delta\Psi_{2}=\bar{\delta}\Psi_{2}=\delta\Phi_{11}=\delta R=0 & \label{nabr01} \\
& \pi\Psi_{2}=0,\quad \tau\Psi_{2}=0 & \label{nabr02} \\
& 2\bar{\pi}\Phi_{11}+\pi\Phi_{02}=0,\quad 2\tau\Phi_{11}+\bar{\tau}\Phi_{02} =0 & \label{nabr03} \\
& \bar{\delta}\Phi_{02}=2(\alpha-\bar{\beta})\Phi_{02},\quad \delta\Phi_{02}=-2(\bar{\alpha}-\beta)\Phi_{02} \, .& \label{nabr04}
\end{eqnarray}
\noindent  Here we have also included the reduced Bianchi
equations relevant to each boost weight.  Evidently, from
equations (\ref{nabrm21}), (\ref{nabrm12}) and (\ref{nabr02}) the
algebraic classification of $\nabla(Riem)$ depends on whether
$\Psi_{2}$ vanishes or not. If $\Psi_{2}=0$, then equations
(\ref{nabrm22}), (\ref{nabrm13}) and (\ref{nabr03}) give rise to
additional subcases dependent on whether $4\Phi_{11}^{\ \  2} -
\Phi_{02}\Phi_{20}$ vanishes or not, thus determining if there
exists non-trivial algebraic solutions of the spin coefficients.
Since the boost weight -3 components of $\nabla(Riem)$ vanish, we
use the simplified notation for the algebraic types given below
by: III:=(III,H) and N:=(N,H).

\noindent \emph{Case 1}: $\Psi_{2}\neq 0$.  $\nabla(Riem)$ is type
\begin{itemize}
\item[(II,I) :] $\nu=0$
\item[D :] $\nu=\lambda=\mu=0$, and (\ref{nabrm11}) and (\ref{nabrm14})
\item[III :] $\tau=\pi=0$, and (\ref{nabr01}) and (\ref{nabr04})
\item[ N :] $\lambda=\mu=\tau=\pi=0$, and (\ref{nabrm11}), (\ref{nabrm14}), (\ref{nabr01}) and (\ref{nabr04})
\item[ O :] same as type N and also $\nu=0$
\end{itemize}

\noindent \emph{Case 2}: $\Psi_{2}=0$ (conformally flat).  In this
case boost weight -2 components of $\nabla(Riem)$ always vanish;
therefore, $\nabla(Riem)$ is, in general, type (II,I).

\emph{Case 2.1}: $4\Phi_{11}^{\ \ 2}-\Phi_{02}\Phi_{20}\neq 0$.  For this subcase $\nu=0$, and the rest of the conditions for $\nabla(Riem)$ types are the same as in Case 1 except type N is excluded.

\emph{Case 2.2}: $4\Phi_{11}^{\ \ 2}-\Phi_{02}\Phi_{20}=0$.  Admits non-trivial solutions for the spin coefficients and $\nu$ always satisfies (\ref{nabrm22}).  Then $\nabla(Riem)$ is type
\begin{itemize}
\item[(II,I) :] no further conditions
\item[D :] (\ref{nabrm11}), (\ref{nabrm14}) and $\lambda$, $\mu$ satisfy (\ref{nabrm13})
\item[III :] (\ref{nabr01}), (\ref{nabr04}) and $\tau$, $\pi$ satisfy (\ref{nabr03})
\item[O :] union of type D and III conditions.
\end{itemize}

\noindent Vacuum can only occur in Case 1 whereas in 2.2 it
reduces to flat space.  All vacuum Riemann type D (hence Petrov
type D) solutions are known (see (31.41) of \cite{kramer}).  In
both Cases 1 and 2, if $\nabla(Riem)$ is type O then we obtain a
symmetric space.  It is important to note that the non-vanishing
boost weight components of a given $\nabla(Riem)$ type possess
additional constraints through Bianchi which are not included
above.  For example, in $\nabla(Riem)$ type D all boost weight 0
components are also subject to (\ref{bianb01})--(\ref{bianb03}).

\subsection{Classification of $\nabla^{2}(Riem)$}

We can also algebraically classify $\nabla^{2}(Riem)$. We shall
just present type D here.

\subsubsection{Classification of $\nabla^{(2)}(Riem)$: type D}

In principle, an algebraic classification of $\nabla^{(2)}(Riem)$
can be performed for any fixed algebraic types of $Riem$ and
$\nabla(Riem)$.  We shall present a partial classification of
$\nabla^{(2)}(Riem)$ when $Riem$ and $\nabla(Riem)$ are both of
type D.

For Case 1, that is assuming $\Psi_{2}\neq 0$,  we obtain the
following boost weight +1 and -1 components, up to a constant
factor, after using the Bianchi equations and commutator
relations:

\noindent $(\nabla^{2} R)_{+1}$:
\begin{eqnarray}
& D\pi(3\Psi_2 -2\Phi_{11})-D\bar{\pi}\Phi_{20} &  \label{nabsqr11} \\
& D\pi (3\Psi_{2}+2\Phi_{11})+D\bar{\pi}\Phi_{20} &    \label{nabsqr12}
\end{eqnarray}
\noindent $(\nabla^{2} R)_{-1}$:
\begin{eqnarray}
&
[\Delta\tau-(\gamma-\bar{\gamma})\tau](3\Psi_{2}+2\Phi_{11})+[\Delta\bar{\tau}-
(\bar{\gamma}-\gamma)\bar{\tau}]\Phi_{02} &  \label{nabsqrm11} \\
&
[\Delta\tau-(\gamma-\bar{\gamma})\tau](-3\Psi_{2}+2\Phi_{11})+[\Delta\bar{\tau}-
(\bar{\gamma}-\gamma)\bar{\tau}]\Phi_{02} \, , &  \label{nabsqrm12}
\end{eqnarray}

\noindent  along with their complex conjugates. In addition, there
are boost weight 0 components in $\nabla^{(2)}(Riem)$ that are not
given here.  Since $\Psi_{2}\neq 0$ and recalling that the Riemann
type D aligned frame is completely fixed, then the vanishing of
boost weight +1 components implies (\ref{dpi}) and vanishing of
boost weight -1 components implies (\ref{deltau}):
\begin{eqnarray}
D\pi & = & 0 \label{dpi} \\
\Delta\tau-(\gamma-\bar{\gamma})\tau & = & 0 \, . \label{deltau}
\end{eqnarray}

\noindent Therefore, if $Riem$ and $\nabla(Riem)$ are of type D
and $\Psi_{2}\neq 0$ then, in general, $\nabla^{(2)}(Riem)$ will
be of type (I,I) (i.e., has only non-vanishing boost weight +1, 0,
-1 components). $\nabla^{(2)}(Riem)$ is type (II,I) if (\ref{dpi})
is satisfied and type D if (\ref{dpi}) and (\ref{deltau}) are
satisfied.  In fact, these conditions also characterize the same
algebraic types of $\nabla^{(2)}(Riem)$ for Case 2.1.  This
follows from the definition of Case 2.1; since it has the same
conditions as Case 1 except with the additional constraints
$\Psi_{2}=0$, $4\Phi_{11}^{\ \ 2}-\Phi_{02}\Phi_{20}\neq 0$ we
obtain precisely the same components of $\nabla^{(2)}(Riem)$.  If
$\Psi_{2}=0$, $4\Phi_{11}^{\ \ 2}-\Phi_{02}\Phi_{20}\neq 0$ then
the vanishing of boost weight +1 components, (\ref{nabsqr11}) and
its complex conjugate, gives again (\ref{dpi}).  In addition, the
vanishing of boost weight -1 components, (\ref{nabsqrm11}) and its
complex conjugate, again results in (\ref{deltau}).  We can now
characterize when Riemann and all of its covariant derivatives
will be type D in Cases 1 and 2.1,

\begin{thm}
In the Kundt class, if $Riem$ and $\nabla(Riem)$ are type D and $\Psi_{2}\neq
0$ or $\Psi_{2}=0$, $4\Phi_{11}^{\ \ 2}-\Phi_{02}\Phi_{20}\neq 0$,  then $D\pi=0$ and $\Delta\tau-(\gamma-\bar{\gamma})\tau=0$ if and only if $\nabla^{(n)}(Riem)$ is type D for all $n\geq 2$.
\end{thm}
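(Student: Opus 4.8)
My plan is to prove the nontrivial (forward) implication by induction on $n$, in the spirit of the proof of Theorem \ref{thmK2}, and to dispose of the converse at once. For the converse, if $\nabla^{(n)}(Riem)$ is type D for all $n\geq 2$ then in particular $\nabla^{(2)}(Riem)$ is type D, and the computation immediately preceding the theorem --- the boost weight $+1$ and $-1$ components (\ref{nabsqr11})--(\ref{nabsqrm12}) --- already shows that, under either $\Psi_{2}\neq 0$ or $\Psi_{2}=0,\ 4\Phi_{11}^{\ \ 2}-\Phi_{02}\Phi_{20}\neq 0$, this forces (\ref{dpi}) and (\ref{deltau}), i.e.\ $D\pi=0$ and $\Delta\tau-(\gamma-\bar\gamma)\tau=0$. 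So I concentrate on the forward direction, whose base case $n=2$ is exactly the content of (\ref{dpi})--(\ref{deltau}) established just above.

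The key reduction is that, since $\nabla(Riem)$ is type D, we have $\nu=\lambda=\mu=0$ in addition to $\kappa=\sigma=\rho=\epsilon=0$, so (exactly as in the $K_{1}\Rightarrow K_{n}$ argument) the transport terms (\ref{npcdl})--(\ref{npcdm}) acting on a boost weight $0$ quantity produce only boost weights $0$ (via $\delta,\bar\delta$ and the coefficients $\tau,\pi,\alpha,\beta$), $-1$ (via $\Delta$ and $\gamma$, the $\mu,\lambda$ contributions dropping out) and $+1$ (via $D$). Hence, assuming $\nabla^{(k)}(Riem)$ is type D for all $k\le n$, the boost weight $+1$ part of $\nabla^{(n+1)}(Riem)$ is $D$ applied to the components $(\nabla^{n}R)_{0}$, while its boost weight $-1$ part is the spin-weight-adapted lowering derivative applied to those same components (for a spin weight $w$ quantity this is $\Delta-w(\gamma-\bar\gamma)$); all boost weights $\le -2$ and $\ge +2$ vanish automatically because $\nu=0$. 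The whole statement therefore reduces to the single claim: \emph{every boost weight $0$ component of every $\nabla^{(n)}(Riem)$ is annihilated both by $D$ and by its spin-weight-adapted $\Delta$.}

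I would prove this claim by a secondary induction on the number of $\delta,\bar\delta$ applied, noting that the boost weight $0$ components are polynomials in the generators $\Psi_{2},\Phi_{11},R,\Phi_{02},\Phi_{20},\tau,\pi$ and their conjugates, acted on by $\delta,\bar\delta$. For the generators: degenerate Kundt gives $D\Psi_{2}=D\Phi_{11}=DR=D\Phi_{02}=0$; the $\nabla(Riem)$ type D conditions (\ref{nabrm11}) and (\ref{nabrm14}) give $\Delta\Psi_{2}=\Delta\Phi_{11}=\Delta R=0$ and $\Delta\Phi_{02}=2(\gamma-\bar\gamma)\Phi_{02}$, so the adapted $\Delta$ annihilates $\Psi_{2},\Phi_{11},R,\Phi_{02}$; the NP equations, valid here because the boost weight $+1$ curvature vanishes, give $D\tau=0$ and $D\alpha=D\beta=0$ as in (\ref{NPeqnszero}), while the Bianchi/NP equations give $\Delta\pi+(\gamma-\bar\gamma)\pi$ proportional to $\Psi_{3}+\Phi_{21}$, which vanishes by Riemann type D, so the adapted $\Delta$ annihilates $\pi$. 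The only two generator-derivatives \emph{not} forced by these standard identities are $D\pi$ and the adapted $\Delta$ of $\tau$, and these are exactly what the hypotheses $D\pi=0$ and $\Delta\tau-(\gamma-\bar\gamma)\tau=0$ supply. Since $D$ and the adapted $\Delta$ are derivations, the products are handled too; to pass an $\delta$ or $\bar\delta$, I commute $D$ (resp.\ the adapted $\Delta$) past it using commutators of the type (\ref{commDeldelD}), which in this frame turn $D(\delta f)$ into $\delta(Df)$ plus a spin-coefficient multiple of $Df$, and similarly for the adapted $\Delta$, all vanishing by the inner inductive hypothesis on $f$.

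The step needing the most care, and where I expect the only genuine work to lie, is checking that these commutators leave no residual terms. A generic GHP-type commutator carries, besides transport terms with coefficients $\kappa,\sigma,\rho,\mu,\lambda,\nu$, a curvature weight term; the transport coefficients all vanish in our frame, and the weight terms occurring in $[D,\delta],[D,\bar\delta]$ and in $[\Delta,\delta],[\Delta,\bar\delta]$ are built from the boost weight $\pm1$ scalars $\Psi_{1},\Psi_{3},\Phi_{01},\Phi_{10},\Phi_{12},\Phi_{21}$, every one of which is zero because Riemann is type D. This is precisely why the argument closes for type D and would fail for a genuinely type II base: it is the alignment and the vanishing of all off-diagonal boost weight curvature that removes the obstruction. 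Granting the claim, $(\nabla^{n+1}R)_{+1}=0$ and $(\nabla^{n+1}R)_{-1}=0$, and with the remaining boost weights vanishing automatically, $\nabla^{(n+1)}(Riem)$ is type D; the induction, with base case $n=2$, then yields type D for all $n\ge 2$, completing the forward direction.
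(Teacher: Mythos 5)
Your proposal is correct and follows essentially the same route as the paper's proof: after disposing of the converse via the $\nabla^{(2)}(Riem)$ computation, you induct on $n$, use $\nu=\lambda=\mu=0$ (together with the Kundt conditions) to confine the new components of $\nabla^{(n+1)}(Riem)$ to boost weights $+1,0,-1$, and reduce everything to showing that $D$ and the $(\gamma-\bar{\gamma})$-weighted $\Delta$ annihilate all boost weight $0$ components, which you verify on the generators $\Psi_{2},\Phi_{11},R,\Phi_{02},\tau,\pi$ using the two hypotheses, the NP/Bianchi equations and the commutators --- exactly the content of the paper's equations (\ref{rn1s})--(\ref{Ddeltarn0}). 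Your GHP-weighted-operator phrasing merely repackages the paper's explicit $(r-s)(\gamma-\bar{\gamma})$ and $(\bar{\alpha}-\beta)$ bookkeeping; it is a tidier presentation of the same argument, not a different one.
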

\begin{proof}
In the above, we showed that (\ref{dpi}) and (\ref{deltau}) are
necessary and sufficient conditions for $\nabla^{(2)}(Riem)$ to be
type D.  Suppose $\nabla^{(n-1)}(Riem)$ and $\nabla^{(n)}(Riem)$
are type D for a fixed $n\geq 2$; then there exists an NP tetrad
in which the only non-vanishing components are boost weight 0.
Symbolically, we shall write a representative term of
$(\nabla^{n}R)_{0}$ as,
\begin{equation}
\nabla^{(n)}(Riem)=R^{n}_{0}S \, , \label{rn0s}
\end{equation}
\noindent where $R^{n}_{0}$ is a NP tetrad component of boost
weight 0 of the $n^{th}$ covariant derivative of Riemann.  In
general, $S=S(p,q,r,s)$, is a rank $n+4=p+q+r+s$ tensor
representing the outer product of $n+4$ tetrad vectors with $p$,
$q$, $r$, $s$ counting the number of $\ell$, ${\bf n}$, ${\bf m}$
and ${\bf \bar{m}}$ vectors, respectively.  Hence, $S(p,q,r,s)$ is
associated with tetrad components of boost weight $q-p$; in
particular, (\ref{rn0s}) has $S=S(q,q,r,s)$.

Taking a covariant derivative of (\ref{rn0s}) and applying
(\ref{npcdl})--(\ref{npcdm}) gives the following non-vanishing
components of boost weight +1, 0 and -1:
\begin{equation}
\nabla^{(n+1)}(Riem)=DR^{n}_{0}\, Sn + (\nabla^{n}R)_{0} + [\Delta
R^{n}_{0}+(r-s)(\gamma-\bar{\gamma})R^{n}_{0}]S\ell \, . \label{rn1s}
\end{equation}
\noindent Therefore, $\nabla^{(n+1)}(Riem)$ is type D if the
following conditions hold for all boost weight 0 components of
$\nabla^{(n)}(Riem)$:
\begin{eqnarray}
DR^{n}_{0} & = & 0 \label{drn1} \\
\Delta R^{n}_{0}+(r-s)(\gamma-\bar{\gamma})R^{n}_{0} & = & 0 \, .
\label{delrn1}
\end{eqnarray}
\noindent To proceed, we must determine the form of the $R^{n}_{0}$ tetrad
components.  Since it is assumed that $\nabla^{(n-1)}(Riem)$ is type D
then we also have
\begin{equation}
\nabla^{(n-1)}(Riem)=R^{n-1}_{0}\tilde{S}(\tilde{q},\tilde{q},\tilde{r},\tilde{s})
\, \label{rnm1s}
\end{equation}
\noindent where $n+3=2q+r+s$.  By assumption, the covariant
derivative of (\ref{rnm1s}) gives rise only to boost weight  0
components, $R^{n}_{0}$,
\begin{eqnarray}
\lefteqn{\nabla^{(n)}(Riem) = \left[-\delta R^{n-1}_{0} +
(\tr-\ts)(\bar{\alpha}-\beta)R^{n-1}_{0}\right]\tS(\tq,\tq,\tr,\ts)
\bar{m} } \nonumber \\
& & \mbox{} + \left[-\bar{\delta} R^{n-1}_{0} -
(\tr-\ts)(\alpha-\bar{\beta})R^{n-1}_{0}\right]\tS(\tq,\tq,\tr,\ts) m
+ R^{n-1}_{0} \left\{ \tq \left[-\bar{\tau}\tS(\tq-1,\tq,\tr+1,\ts)\ell
\nonumber \right.\right.\\
& & \mbox{}\left. - \tau\tS(\tq-1,\tq,\tr,\ts+1)\ell +
\pi\tS(\tq,\tq-1,\tr+1,\ts)n + \bar{\pi}\tS(\tq,\tq-1,\tr,\ts+1)n \right]
\nonumber \\
& & \mbox{} + \tr\left[\bar{\pi}\tS(\tq+1,\tq,\tr-1,\ts)n -
\tau\tS(\tq,\tq+1,\tr-1,\ts)\ell\right]  \nonumber \\
& & \mbox{}\left. +
\ts\left[\pi\tS(\tq+1,\tq,\tr,\ts-1)n-\bar{\tau}\tS(\tq,\tq+1,\tr,\ts-1)\ell\right]
\right\} \, . \label{nabrn0}
\end{eqnarray}
\noindent Furthermore, equations (\ref{drn1}) and (\ref{delrn1})
are identically satisfied under replacement $n,r,s\mapsto
n-1,\tr,\ts$; i.e., $\nabla^{(n)}(Riem)$ is type D.

First consider equation (\ref{drn1}).  From (\ref{nabrn0}), a
component $R^{n}_{0}$ is proportional to $R^{n-1}_{0}\pi$;
therefore, $DR^{n}_{0}=D(R^{n-1}_{0}\pi)=R^{n-1}_{0}D\pi=0$ which
is satisfied since $\nabla^{(2)}(Riem)$ type D implies $D\pi=0$.
Another boost weight 0 component is proportional to
$R^{n-1}_{0}\tau$, then $D(R^{n-1}_{0}\tau)=R^{n-1}_{0}D\tau=0$
holds as a consequence of an NP equation.  Similar conclusions
hold for $R^{n-1}_{0}\bar{\pi}$ and $R^{n-1}_{0}\bar{\tau}$. Next,
using the NP equations $D\alpha=D\beta=0$ and commutator relation
we find that
\begin{equation}
D\left[-\delta R^{n-1}_{0} +
(\tr-\ts)(\bar{\alpha}-\beta)R^{n-1}_{0}\right]=-\delta
DR^{n-1}_{0}+(\bar{\alpha} + \beta - \bar{\pi})DR^{n-1}_{0}=0 \, .
\end{equation}
\noindent The same conclusion holds for the second component of
(\ref{nabrn0}).  It now follows that $D\pi=0$ is the necessary and
sufficient condition for $\nabla^{(n+1)}(Riem)$ to have vanishing
boost weight +1 components.

Last, consider equation (\ref{delrn1}).  Using the NP equation
$\Delta\pi=-(\gamma-\bar{\gamma})\pi$, the component
$R^{n-1}_{0}\pi$ gives
\begin{equation}
\Delta(R^{n-1}_{0}\pi)+(r-s)(\gamma-\bar{\gamma})R^{n-1}_{0}\pi=(r-s-\tr+\ts-1)(\gamma-\bar{\gamma})R^{n-1}_{0}
\pi =0
\end{equation}
\noindent where the last equality follows from the fact that
$R^{n-1}_{0}\pi$ occurs in (\ref{nabrn0}) with either $r=\tr +1$,
$s=\ts$ or $r=\tr$, $s=\ts -1$.  For the boost weight 0 component
$R^{n-1}_{0}\tau$, equation (\ref{delrn1}) becomes
\begin{equation}
\left[\Delta\tau +
(r-s-\tr+\ts)(\gamma-\bar{\gamma})\tau\right]R^{n-1}_{0}=0 \, .
\label{delrnm1tau}
\end{equation}
\noindent Since (\ref{nabrn0}) implies that $R^{n-1}_{0}\tau$
occurs with either $r=\tr$, $s=\ts+1$ or $r=\tr-1$, $s=\ts$, then
(\ref{delrnm1tau}) reduces to
$\Delta\tau-(\gamma-\bar{\gamma})\tau=0$ which is satisfied as a
consequence of having $\nabla^{(2)}(Riem)$ type D.  Again, similar
arguments apply to the components $R^{n-1}_{0}\bar{\pi}$ and
$R^{n-1}_{0}\bar{\tau}$.  Next, we consider the $R^{n}_{0}$
component of (\ref{nabrn0}) proportional to $-\delta R^{n-1}_{0} +
(\tr-\ts)(\bar{\alpha}-\beta)R^{n-1}_{0}$ and substitute it into
(\ref{delrn1}).  By applying the commutator relation
\begin{equation}
\Delta\delta(R^{n-1}_{0})=\delta\Delta(R^{n-1}_{0})-(\tau-\bar{\alpha}-\beta)\Delta
R^{n-1}_{0} +(\gamma-\bar{\gamma})\delta R^{n-1}_{0} \, ,
\end{equation}
\noindent the identity $\Delta
R^{n-1}_{0}=-(\tr-\ts)(\gamma-\bar{\gamma})R^{n-1}_{0}$, and the NP
equations
\begin{eqnarray}
\Delta\alpha & = & \bar{\delta}\gamma + \bar{\gamma}\alpha +
(\bar{\beta}-\bar{\tau})\gamma \\
\Delta\beta & = & \delta\gamma + (\bar{\alpha}+\beta-\tau)\gamma + (\gamma
- \bar{\gamma})\beta \, ,
\end{eqnarray}
\noindent we simplify to get
\begin{equation}
\left(\tr-\ts-r+s-1\right)(\gamma-\bar{\gamma})\left[\delta
R^{n-1}_{0}-(\tr-\ts)(\bar{\alpha}-\beta)R^{n-1}_{0} \right]=0 \, .
\label{Ddeltarn0}
\end{equation}
\noindent Equality in (\ref{Ddeltarn0}) follows from
(\ref{nabrn0}), which implies $r=\tr$, $s=\ts+1$.  The second
component of (\ref{nabrn0}) also identically satisfies
(\ref{delrn1}), except here $r=\tr+1$, $s=\ts$. Thus,
$\Delta\tau-(\gamma-\bar{\gamma})\tau=0$ is the necessary and
sufficient condition for $\nabla^{(n+1)}(Riem)$ to have vanishing
boost weight -1 components.

Therefore, we have shown that (\ref{dpi}) and (\ref{deltau}) are
the necessary and sufficient conditions for $\nabla^{(n+1)}(Riem)$
to have vanishing boost weight +1 and -1 components for any $n
\geq 2$, and hence to be of type D.
\end{proof}

Now consider Case 2.2, defined by $\Psi_{2}=0$, $4\Phi_{11}^{\ \ 2}-\Phi_{02}\Phi_{20}=0$ with $\nu$, $\lambda$ and $\mu$ satisfying (\ref{nabrm22}) and (\ref{nabrm13}).  In addition to (\ref{nabrm11}) and (\ref{nabrm14}) holding, the Bianchi equations also give
\begin{equation}
D\Phi_{11}=D\Phi_{02}=DR=0 \, . \label{Dsczero}
\end{equation}
\noindent Clearly, $\Phi_{11}=0$ if and only if $\Phi_{02}=0$, in
which case $R$ is the only non-vanishing curvature scalar.
However, the Bianchi equations consequently give $\delta R=0$ so
that (\ref{nabrm11}) and (\ref{Dsczero}) imply that $R$ is a
constant. Thus we obtain a space of constant curvature having
$Riem$ type D and $\nabla(Riem)$ type O since it vanishes (a
symmetric space). In the remainder of this section we shall assume
$\Phi_{11}\neq 0$.

Using the NP equations, Bianchi equations and commutation
relations to simplify the components of $\nabla^{(2)}(Riem)$ we
find that, up to a constant factor, all boost weight +1 components
reduce to $2D\pi\Phi_{11}+D\bar{\pi}\Phi_{20}$ or its complex
conjugate.  Taking the conjugate of (\ref{bianb02}) and
subtracting (\ref{bianb01}) gives the identity
\begin{equation}
2(\tau+\bar{\pi})\Phi_{11}+(\bar{\tau}+\pi)\Phi_{02}=0 \, ,  \label{pitauident}
\end{equation}
\noindent and applying $D$ to (\ref{pitauident}) and using
(\ref{Dsczero}) and the NP equation $D\tau=0$ gives
\begin{equation}
2D\bar{\pi}\Phi_{11}+D\pi\Phi_{02}=0 \, ;    \label{Dpiident}
\end{equation}
\noindent therefore, all boost weight +1 components vanish.  All
other positive boost weightcomponents of $\nabla^{(2)}(Riem)$
vanish.  We note that similar relations to (\ref{Dpiident}) can be
obtained by taking $D$ of (\ref{nabrm22}) and (\ref{nabrm13}).

Since $\nabla(Riem)$ is type D, all boost weight -3 components of
$\nabla^{(2)}(Riem)$ must vanish.  Evaluating these components we
find they can all be made to vanish by applying the following
identities.  Taking $\Delta$ of (\ref{nabrm22}) gives
\begin{equation}
2\Delta\bar{\nu}\Phi_{11}+\left[\Delta\nu+2(\gamma-\bar{\gamma})\nu\right]\Phi_{02}=0
\, ;  \label{delnuident}
\end{equation}
\noindent  similar relations to (\ref{delnuident}) can be obtained by taking $\Delta$ of (\ref{nabrm13}).  Assuming $\nu \neq 0$, then from (\ref{nabrm22}), its conjugate and the second equation of (\ref{nabrm13}) we obtain
\begin{eqnarray}
\nu^{2}\Phi_{02}-\bar{\nu}^2\Phi_{20}=0, & \bar{\lambda}\nu-\mu\bar{\nu}=0 \, .
\end{eqnarray}
If $\nu=0$, then all boost weight -3 components vanish identically
in $\nabla^{(2)}(Riem)$.

Next, we list some of the identities that are useful in the
simplification of the remaining non-vanishing components of
$\nabla^{(2)}(Riem)$ at boost weight -2, -1 and 0:
\begin{eqnarray}
\mu\lambda\Phi_{02}-\bar{\mu}\bar{\lambda}\Phi_{20}=0, & \mu\bar{\mu}-\lambda\bar{\lambda}=0 \\
\lambda^{2}\Phi_{02}-\bar{\mu}^{2}\Phi_{20}=0, & \nu\lambda\Phi_{02}-\bar{\nu}\bar{\mu}\Phi_{20}=0 \\
\lambda(\bar{\tau}+\pi)\Phi_{02}-\bar{\mu}(\tau+\bar{\pi})\Phi_{20}=0, & \bar{\lambda}(\bar{\tau}+\pi)-\mu(\tau + \bar{\pi})=0 \label{lmtpident} \\
\nu(\bar{\tau}+\pi)\Phi_{02}-\bar{\nu}(\tau+\bar{\pi})\Phi_{20}=0, & \nu(\tau+\bar{\pi})-\bar{\nu}(\bar{\tau}+\pi)=0 \label{ntpident} \\
(\bar{\tau}+\pi)^{2}\Phi_{02}-(\tau+\bar{\pi})^2\Phi_{20}=0,
\end{eqnarray}

\begin{eqnarray}
\Delta^{2}\Phi_{02}-2\left[\Delta\gamma - \Delta\bar{\gamma}+ 2(\gamma-\bar{\gamma})^2\right]\Phi_{02}=0 \\
2\bar{\delta}\bar{\nu}\Phi_{11}+\bar{\delta}\nu\Phi_{02}-\frac{1}{12}\nu\delta R-\frac{1}{12}\bar{\nu}\bar{\delta}R + 2(\nu\bar{\pi}-\bar{\nu}\pi)\Phi_{11} + 2\nu(\alpha-\bar{\beta})\Phi_{02} =0 \\
2\delta\bar{\nu}\Phi_{11}+\delta\nu\Phi_{02}+\nu\delta\Phi_{02}-\frac{1}{12}\bar{\nu}\delta R + 2(\nu\bar{\pi}-\bar{\nu}\pi)\Phi_{02}=0 \\
(\bar{\tau}+\pi)\delta R + (\tau+\bar{\pi})\bar{\delta}R = 0 \, . \label{tpRident}
\end{eqnarray}
\noindent Other identities also follow by taking $D$, $\Delta$,
$\delta$ or $\bar{\delta}$ of known identities and applying the NP
equations, Bianchi equations and commutation relations.  We find
the boost weight -1 and -2 components of $\nabla^{(2)}(Riem)$
reduce to

\noindent $(\nabla^{2} R)_{-1}$:
\begin{equation}
\mu\delta R + \bar{\lambda}\bar{\delta}R  \label{22nabsqrm1}
\end{equation}
\noindent $(\nabla^{2} R)_{-2}$:
\begin{equation}
\nu\delta R + \bar{\nu}\bar{\delta}R \, .   \label{22nabsqrm2}
\end{equation}
\noindent As in the previous cases, the boost weight 0 components,
$(\nabla^{2} R)_{0}$,  are, in general, non-zero (but we do not
give them here).   Therefore, if $Riem$ and $\nabla(Riem)$ are
type D and $\Psi_{2}=0$, $4\Phi_{11}^{\ \ 2}-\Phi_{02}\Phi_{20}=0$
then, in general, $\nabla^{(2)}(Riem)$ is type (II,H) (i.e., has
non-vanishing components of boost weight 0, -1 and -2).
$\nabla^{(2)}(Riem)$ is type (II,I) if (\ref{22nabsqrm2}) vanishes
and type D if (\ref{22nabsqrm1}) and (\ref{22nabsqrm2}) vanish.
Notice that the type D conditions are the requirements for
(\ref{tpRident}) and the second equations of (\ref{lmtpident}) and
(\ref{ntpident}) to admit a nontrivial solution; $\tau+\bar{\pi}
\neq 0$.

\subsubsection{Final comments}

Suppose we impose the requirement that $\nabla^{(2)}(Riem)$ is
type D, so that (\ref{22nabsqrm1}) and (\ref{22nabsqrm2}) vanish.
Calculating $\nabla^{(3)}(Riem)$, we expect all positive boost
weight components to vanish, which would follow from
(\ref{Dsczero}), the NP equations $D\tau=D\alpha=D\beta=0$, and a
higher order identity on $D\pi$ similar to equation
(\ref{Dpiident}). It is likely that these conditions will result
in the vanishing of all positive boost weight components of
$\nabla^{(n)}(Riem)$ for all $n\geq 3$, thus leading in general to
primary alignment type II.  In addition to boost weight  0
components, $\nabla^{(3)}(Riem)$ may contain components of boost
weight -1 and -2.  One possibility is that the boost weight -1, -2
components vanish identically, as a result of (\ref{22nabsqrm1})
and (\ref{22nabsqrm2}) vanishing.  In this case,
$\nabla^{(3)}(Riem)$ is type D and one might expect that all
higher covariant derivatives of the Riemann tensor are also type
D, so that the vanishing of (\ref{22nabsqrm1}) and
(\ref{22nabsqrm2}) provide necessary and sufficient conditions for
the Riemann tensor and all of its covariant derivatives to be type
D within Case 2.2. Another possibility is that components of
$\nabla^{(3)}(Riem)$ of boost weight -1, -2 do not vanish.
Requiring boost weight -1 and -2 to vanish would provide
additional constraints on $\mu$, $\lambda$ and $\nu$,
respectively. By requiring $\nabla^{(n)}(Riem)$ be type D for all
$n \geq 3$ we would obtain a sequence of constraints on $\mu$,
$\lambda$ and $\nu$, and it is plausible that this will reduce to
the trivial solution $\mu=\lambda=\nu=0$.  It should be noted that
this is not in general a specialization of Case 1 or 2.1, since
throughout 2.2 we require $\Psi_{2}=0$ and $4\Phi_{11}^{\ \
2}-\Phi_{02}\Phi_{20}=0$.

%\newpage
\section{Kundt spacetimes and their invariant classification}

It is of interest to provide invariant conditions to distinguish
the degenerate Kundt class from the remaining Kundt metrics.  We
define the following invariant and covariant quantities involving
the second and third Lie derivative with respect to ${\bl}$:
\begin{equation}
\begin{array}{ll}
I_{0}=R^{abcd}R_{a\ c\ }^{\ e\
f}\mathcal{L}_{\ell}\mathcal{L}_{\ell}g_{bd}\mathcal{L}_{\ell}
\mathcal{L}_{\ell}g_{ef}\, , &
K_{ab}=\mathcal{L}_{\ell}\mathcal{L}_{\ell}\mathcal{L}_{\ell}g_{ab}\,
.  \label{invarconds}
\end{array}
\end{equation}
\noindent From the Kundt metric we find that $I_0$ is proportional
to \mbox{$P^{4}(W_{1,vv}^{\quad\ 2}+W_{2,vv}^{\quad\ 2})$} and
$K_{ab}$ has components proportional to the third derivative of
$H$, $W_{1}$ and $W_{2}$ with respect to $v$, thus establishing
the following result: {\footnote{We recall that the $n^{th}$ order
degenerate Kundt class, $K_{n}$, is defined in Definition 4.1.}}

\begin{prop}
Within the Kundt class
\begin{enumerate}
\item[i)] $I_{0}=0$ if and only if $K_{0}$ is satisfied.
\item[ii)] $I_{0}=K_{ab}=0$ if and only if $K_{1}$ is satisfied.
\end{enumerate}
\end{prop}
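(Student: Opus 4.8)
The plan is to reduce all Lie derivatives to ordinary $v$-derivatives and then read the two equivalences off the metric-function characterizations of $K_0$ and $K_1$ established in the discussion of Definition 4.1. Since the kinematic Kundt vector ${\bl}=\partial_v$ has constant components in the coordinates $\{u,v,x^i\}$ of (\ref{Kundt}), $\mathcal{L}_\ell$ acts on the metric as $\partial_v$. The only $v$-dependent components of (\ref{Kundt}) are $g_{uu}=2H$ and $g_{ui}=W_i$ (recall $\partial_v P=0$, so $g_{uv}=1$ and $g_{ij}$ are $v$-independent). Hence $S_{ab}:=\mathcal{L}_\ell\mathcal{L}_\ell g_{ab}=\partial_v^2 g_{ab}$ has nonzero entries only $S_{uu}=2H_{,vv}$ and $S_{ui}=W_{i,vv}$, while $K_{ab}=\partial_v^3 g_{ab}$ has only $K_{uu}=2H_{,vvv}$ and $K_{ui}=W_{i,vvv}$. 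In particular $K_{ab}=0$ if and only if $H_{,vvv}=0$ and $W_{i,vvv}=0$.

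The crux is the evaluation of $I_0$. I would first rewrite it as a squared norm: since $R_a{}^{e}{}_c{}^{f}=g_{a\alpha}g_{c\gamma}R^{\alpha e\gamma f}$, setting $T^{ac}:=R^{abcd}S_{bd}$ gives $I_0=T^{ac}T_{ac}$. The decomposition above shows that the $H$-part of $S$ is $2H_{,vv}\,\ell_a\ell_b$ with $\ell_a=\delta_a^{u}$ (the $1$-form $du$), so its contribution to $T$ is $2H_{,vv}\,R^{abcd}\ell_b\ell_d$. The key point is that $R^{abcd}\ell_b\ell_d$ is assembled from the components $R_{\alpha v\gamma v}=R_{abcd}\ell^b\ell^d$, whose purely transverse part $R_{ivjv}$ is precisely the boost weight $+2$ Riemann component and therefore vanishes in the kinematic frame. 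Feeding this through the inverse metric ($g^{uv}=1$, $g^{uu}=g^{ui}=0$) forces the $H$-contribution to $T$ to take the form $\ell^{(a}K^{c)}$, with $K$ lying in the span of $\ell,{\bf m},{\bf \overline m}$ (no ${\bf n}$ component); since $\ell$ is null, $\ell\cdot K=0$. Consequently $T_H$ is a null tensor whose contraction with the full $T$ vanishes, so every $H_{,vv}^2$ and $H_{,vv}W_{i,vv}$ term drops out of $I_0$, leaving only the $W$-part. Tracking the two inverse transverse metrics ($g^{ij}\propto P^2$) then yields the stated form $I_0=c\,P^4(W_{1,vv}^{\,2}+W_{2,vv}^{\,2})$ with $c\neq0$. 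Verifying this cancellation exactly---in particular that the mixed $HW$ cross term also vanishes---is the step I expect to be most delicate; everything else is bookkeeping.

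With both computations in hand the proposition follows, using $K_0\Leftrightarrow W_{i,vv}=0$ and $K_1\Leftrightarrow(W_{i,vv}=0$ and $H_{,vvv}=0)$ from the discussion of Definition 4.1. For (i), since $P\neq0$ and the $W_{i,vv}$ are real, $I_0=0$ holds iff $W_{1,vv}=W_{2,vv}=0$, i.e.\ iff $K_0$. For (ii), $I_0=0$ again gives $W_{i,vv}=0$, which forces $W_{i,vvv}=0$; hence once $I_0=0$ the condition $K_{ab}=0$ reduces to $H_{,vvv}=0$, so that $I_0=K_{ab}=0$ is equivalent to $W_{i,vv}=0$ together with $H_{,vvv}=0$, i.e.\ to $K_1$. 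Reading the same equivalences backwards gives the converse directions, completing the proof.
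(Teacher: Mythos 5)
Your preparatory steps are sound: $\mathcal{L}_\ell$ does reduce to $\partial_v$, the only nonzero components of $S_{ab}:=\mathcal{L}_\ell\mathcal{L}_\ell g_{ab}$ and of $K_{ab}$ are the ones you list, part (ii) does follow from part (i) once $K_{ab}=0\Leftrightarrow H_{,vvv}=W_{i,vvv}=0$ is in hand, and your observation that the $H_{,vv}$-part of $T^{ac}=R^{abcd}S_{bd}$ has the null form $\ell^{(a}K^{c)}$ with $\ell\cdot K=0$ (using the automatic vanishing of the boost weight $+2$ components) is correct and does kill every $H_{,vv}^2$ and $H_{,vv}W_{i,vv}$ term. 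The gap is precisely at the step you dismiss as bookkeeping. The surviving contractions pair $S_{ab}n^am_i^b\propto W_{i,vv}$ against boost weight $+1$ components of the Riemann tensor (the components $T^{ac}n_a\ell_c$ and $T^{ac}m^j_am^k_c$ are the only ones that survive the final trace, and each is of the form $W_{i,vv}\times(\mathrm{bw}\ {+1}\ \mathrm{curvature})$). But in the kinematic frame those boost weight $+1$ curvature components are themselves linear in $W_{i,vv}$ --- their vanishing \emph{is} the $K_0$ condition --- so $I_0$ is quartic, not quadratic, in $W_{i,vv}$, and the formula $I_0=c\,P^4(W_{1,vv}^2+W_{2,vv}^2)$ with $c$ a nonzero constant cannot be correct. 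A check: for $ds^2=2\,du\,dv+2W(v)\,du\,dx-dx^2-dy^2$ the independent nonzero curvature components are $R_{uvuv}=-\tfrac{1}{4}W'^2$, $R_{uvvx}=\tfrac{1}{2}W''$, $R_{uvux}=-\tfrac{1}{4}WW'^2$, $R_{uxux}=-\tfrac{1}{4}W^2W'^2$, $R_{uxvx}=-\tfrac{1}{4}W'^2$, and a direct evaluation gives $I_0=\tfrac{1}{2}W''^4=\tfrac{1}{2}W_{1,vv}^{\;4}$. (This also means the parenthetical formula the paper itself quotes is mis-stated in degree; the paper's ``proof'' is just the assertion of this computation, so your outline is the same as the paper's, but the asserted outcome needs correcting.)

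The proposition survives, but your argument as written does not deliver the direction $I_0=0\Rightarrow K_0$: once the $H$-terms are gone you still must show that the quartic that remains is \emph{definite} in $(W_{1,vv},W_{2,vv})$. What actually needs to be checked is (a) that after the trace the only surviving pairings are $2(T^{ac}n_a\ell_c)^2+\sum_{j,k}(T^{ac}m^j_am^k_c)^2$, a sum of squares of real quantities, so that $I_0=0$ forces each to vanish; and (b) that $T^{ac}n_a\ell_c=\sum_i S_{1i}\,\mathcal{R}_i$ with $S_{1i}$ and the boost weight $+1$ components $\mathcal{R}_i$ proportional to $W_{i,vv}$ with the \emph{same} nonzero coefficient, so that $T^{ac}n_a\ell_c$ is a positive-definite quadratic form in the $W_{i,vv}$. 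Step (b) is an explicit curvature computation equivalent in content to the identification $\Psi_1,\Phi_{01}\propto W_{i,vv}$; it is the heart of the proposition and cannot be bypassed by index counting alone, since a priori the two factors could be independent linear functions of $(W_{1,vv},W_{2,vv})$ whose product vanishes on a nontrivial locus.
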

\noindent The covariant condition in \textit{ii)} can be made
invariant by replacing it with the vanishing of the trace of
$K_{ab}$.

\subsubsection{Equivalence Problem for degenerate Kundt spacetimes}

The degenerate Kundt spacetimes are not completely characterized
by their scalar polynomial curvature invariants \cite{inv}.
However, they are completely characterized by their algebraic
properties (as we have discussed above in detail). Moreover, the
degenerate Kundt spacetimes are, of course, uniquely characterized
by their Cartan invariants.

This work is of importance to the equivalence problem of
characterizing Lorentzian spacetimes (in terms of their Cartan
invariants) \cite{kramer}. By knowing which spacetimes can be
characterized by their scalar curvature invariants alone, the
computations of the invariants (i.e., simple  scalar invariants)
is much more straightforward and can be done algorithmically
(i.e., the full complexity of the equivalence method is not
necessary). On the other hand, the Cartan equivalence method also
contains, at least in principle, the conditions under which the
classification is complete (although in practice carrying out the
classification for the more general spacetimes is difficult, if
not impossible). Therefore, in a sense, the full machinery of the
Cartan equivalence method is only necessary for the classification
of the degenerate Kundt spacetimes.

The first step is to completely fix the frame in all algebraic
classes (which is what was actually done in the computations
above). In such a fixed frame all of the remaining components are,
in fact, Cartan scalars. This is the easy part; the complete
characterization depends on all of the different branches that
occur. We hope to return to this problem in the future.

Even though the degenerate Kundt metrics are not determined by
their scalar polynomial curvature invariants, a special result is
possible. Suppose there exists a frame in which all of the
positive boost weight terms of the Riemann tensor and all of its
covariant derivatives  $\nabla^{(k)} (Riem)$ are zero (in this
frame). It follows from \cite{inv} that in 4D the resulting
spacetime is degenerate Kundt. It is of interest to prove that
such a spacetime is degenerate Kundt  in arbitrary dimensions
(i.e., the appropriate Ricci rotation coefficients $L_{ij}$ are
zero). We shall prove this in the final section.

In \cite{CSI4} it was shown that a 4D type D$^k$ CSI spacetime, in
which the Riemann tensor and $\nabla^k(Riemann)$ are all
simultaneously of type D, is locally homogeneous. Thus a CSI
spacetime in which the Riemann tensor and all of its covariant
derivatives are aligned and of algebraic type D, even though they
are not $\mathcal{I}$-non-degenerate, are in some sense
`characterized' by their constant curvature invariants, at least
within the class of type D$^k$ CSI spacetimes. In general, there
are many degenerate Kundt CSI metrics (that are not type D$^k$)
with the same set of constant invariants, $\texttt{I}$. In this
case there is at least one $\nabla^k(Riemann)$ which is proper
type II and thus has negative boost weight terms; this Kundt CSI
metric will have precisely the same scalar curvature invariants as
the corresponding type D$^k$ CSI metric (which has no negative
boost weight terms). Therefore, there a distinguished or a
`preferred' metric with the same set of constant invariants,
$\texttt{I}$; namely, the corresponding type D$^k$ locally
homogeneous CSI metric, which is distinguished within the class of
algebraic type D$^k$ CSI spacetimes. Similar properties are likely
true for all D$^k$ spacetimes.

\subsection{Kundt spacetimes and scalar invariants}

Let us consider to what extent the class of  {\it degenerate}
Kundt spacetimes can be characterized by their scalar polynomial
curvature invariants. Clearly such spacetimes are algebraically
special and of Riemann type II. In particular, the Weyl tensor is
of type II (or more special) and hence $27J^2= I^3$ (see
(\ref{ijf})). If $I=J=0$, then the spacetime is of Weyl type III,
N or O. If the spacetime is of Weyl type N, then
$\mathcal{I}_1=\mathcal{I}_2=0$ if and only if
$\kappa=\rho=\sigma=0$ from the results in \cite{pravda} (the
definitions of the invariants $\mathcal{I}_1$ and $ \mathcal{I}_2$
are given therein). Similar results follow for Weyl type III
spacetimes (in terms of invariants $\tilde{\mathcal{I}}_1$ and
$\tilde{\mathcal{I}}_2$) and in the conformally flat (but
non-vacuum) case (in terms of similar invariants $\mathcal{I}_1$
and $\mathcal{I}_2$ constructed from the Ricci tensor \cite{pravda}; also see eqns.
(\ref{rictypeii})). These conditions, the list of conditions on the
scalar invariants of the Weyl tensor and its covariant derivatives
summarized in \cite{inv} (and, indeed, all of the conditions
discussed below), are {\em necessary} conditions in order for a
spacetime not to be $\mathcal{I}$-non-degenerate (i.e, if any of
these necessary conditions are not satisfied, the spacetime cannot
be degenerate Kundt). For example, if $27J^2\neq I^3$, then the
spacetime is of Petrov type I, and the spacetime is
$\mathcal{I}$-non-degenerate.

In the case that $27J^2= I^3\neq 0$ (Weyl types II or D), in
\cite{inv} two higher order invariants were given as
\emph{necessary} conditions for $\mathcal{I}$-non-degeneracy (if
$27J^2= I^3$, but $S_1\neq 0$ or $S_2\neq 0$, then the spacetime
is $\mathcal{I}$-non-degenerate). Indeed, if $27J^2= I^3\neq 0$
(Weyl types II and D), essentially if $\kappa=\rho=\sigma \neq 0$,
we can construct positive  boost weight terms in the derivatives
of the curvature and determine an appropriate set of scalar
curvature invariants. For example, consider the positive boost
weight terms of the first covariant derivative of the Riemann
tensor, $\nabla(Riem)$. If the spacetime is
$\mathcal{I}$-non-degenerate, then each component of
$\nabla(Riem)$ is related to a scalar curvature invariant. In this
case, in principle  we can solve (for the positive  boost weight
components of $\nabla(Riem)$) to uniquely determine
$\kappa,\rho,\sigma$ in terms of scalar invariants, and we can
therefore find necessary conditions for the spacetime to be
degenerate Kundt (there are two cases to consider, corresponding
to whether $\Psi_2 + \frac{2}{3} \Phi_{11}$ is zero or non-zero).
We note that even if the invariants exist in principle, it may not
be possible to construct them in practice.

Further necessary conditions can be obtained from the fact that
the Ricci tensor is of type II or D (or more special) (see
equations (\ref{rictypeii})/(\ref{rictypeii2}) and (\ref{riccidef})).

It is useful to express  the
conditions (\ref{ijf}) in non-NP form. The syzygy
$I^3-27J^2=0$  is complex, whose real and
imaginary parts can be expressed using invariants of Weyl not containing
duals. The real part is equivalent to:
\begin{equation}
-11 W_{2}^3 + 33 W_2 W_4 - 18 W_6 = 0, \label{weyl1}
\end{equation}
and the imaginary part is equivalent to:
\begin{equation}
(W_{2}^2 - 2 W_4)(W_{2}^2 + W_4)^2 + 18 W_3^2(6 W_6 - 2 W_{3}^2 -
9 W_{2} W_4 + 3 W_{2}^3) = 0, \label{weyl2}
\end{equation}
where
\begin{eqnarray}
W_2 &=& \frac{1}{8}C_{abcd}C^{abcd},\\ \nonumber
W_3  &=& \frac{1}{16}C_{abcd}C^{cd}_{~~ pq}C^{pqab},\\ \nonumber
W_4 &=& \frac{1}{32}C_{abcd}C^{cd}_{~~pq}C^{pq}_{~~r s}C^{rsab}, \\ \nonumber
W_6 &=& \frac{1}{128}C_{abcd}C^{cd}_{~~pq}C^{pq}_{~~r s}C^{rs}_{~~tu}C^{tu}_{~~vw}C^{vwab}.
\label{weyldef}
\end{eqnarray}

\subsubsection{A set of necessary conditions for degenerate Kundt}

For a degenerate Kundt spacetime there exists a frame in which
the Riemann tensor and all of its covariant derivatives have no
positive  boost weight components (i.e., all of $Riem, \nabla(Riem),...
\nabla^k(Riem),...$ are of `type II') (in addition, for example, to being
Kundt). This means that every tensor, $T$, constructed from the
Riemann tensor and its covariant derivatives by contractions,
additions and products, are of type II (i.e., have no positive boost weight
components). We can generate a set of necessary conditions from
these type II conditions.

(i) IIs: all symmetric trace-free (0,2) tensors, $S=T^{s}$, give
rise to necessary conditions of the form (\ref{rictypeii}), which
we shall denote by $\texttt{I}^s_{~i}$, where
\begin{equation}
s_{2}^2(4s_{1}^3-6s_{1}s_{3}+s_{2}^2)-s_{3}^2(3s_{1}^2-4s_{3})=0,
\label{rictypeii2}
\end{equation}
for each $S_i$, and the $s_{\alpha}$ ($\alpha = 1,2,3$) are defined by
\begin{eqnarray}
s_{1} &=& \frac{1}{12}S_{a}^{~b}S_{b}^{~a},\\ \nonumber
s_{2} &=& \frac{1}{24}S_{a}^{~b}S_{b}^{~c}S_{c}^{~a},\\ \nonumber
s_{3} &=& \frac{1}{48} [S_{a}^{~b}S_{b}^{~c}S_{c}^{~d}S_{d}^{~a} -
\frac{1}{4}(S_{a}^{~b}S_{b}^{~a})^2].\label{riccidef}
\end{eqnarray}

For example, in
the case of ($S_{ab}$ given by) the trace-free Ricci tensor
$R_{ab}$, $s_{\alpha} \equiv r_{\alpha}$ and $\texttt{I}^s_{~1}$ is given by equation
(\ref{rictypeii}) in terms of the CZ Ricci invariants $r_{\alpha}$. Other examples include the symmetric
trace-free parts of $R_{;ab}, R_{cd;a}R^{cd}_{~~;b}, ...$.

We note that not all such invariants are independent due to the
symmetries of the curvature tensor and the Bianchi identities. For
example, the necessary condition obtained from the trace-free part
of $\Box(R_{ab})$ would be equivalent to derivatives of the
condition obtained from the trace-free part of the Ricci tensor
itself.

(ii) IIw: all completely trace-free (0,4) tensors with the same
symmetries as the Riemann tensor, $W=T^{w}$, give rise to
necessary conditions of the form (\ref{ijf}), denoted by
$\texttt{I}^w_{~i}$. For example, for the case of ($W_{abcd}$
given by) the Weyl tensor, $\texttt{I}^w_{~1,2}$
are defined by equations (\ref{weyl1})/(\ref{weyl2})
(the real and imaginary parts of (\ref{ijf})). Other examples can be
constructed from the  covariant derivatives of the Weyl tensor.

(iii) IIo: any other invariants, $\texttt{I}^o_i$, that can be
constructed in a similar way.{\footnote {For example, if a
spacetime is Riemann type II, then not only do the Weyl type II
and Ricci type II syzygies hold, but there are additional
alignment conditions; e.g., $C_{abcd}R^{bd},
C_{abcd}R^{be}R_{e}^{~d}$ are of type II. }} In particular, the
invariants $\texttt{I}^s_{~1}$ and $\texttt{I}^w_{~1,2}$ arise
from the properties of the eigenvalues in the degenerate cases of
the eigenvalue (eigenbivalue) problems associated with the Ricci
and the Weyl tensors (defined as curvature operators acting on
tangent vectors and bivectors, respectively \cite{inv}). Any
curvature operator, constructed from the Riemann tensor and its
covariant derivatives, will give rise to an eigenvalue problem
which will give rise to invariants in a similar way.

In this way we generate a set of invariants:
$$\texttt{I} = \{\texttt{I}^s_1, ...\texttt{I}^s_i,...
\texttt{I}^w_1,\texttt{I}^w_2, ...\texttt{I}^w_i,... \texttt{I}^o_1,
...\texttt{I}^o_i,... \},$$
which, in turn, generates a set of necessary conditions (each
member of the set must vanish) for a degenerate Kundt spacetime.
Indeed, if any $\texttt{I}_i \neq 0$, the spacetime cannot be
degenerate Kundt.

\subsubsection{Comments}

\noindent There are a number of questions that arise: 1. Can we
find an independent subset of $\texttt{I}$? 2. Can we find a
minimal subset of $\texttt{I}$? 3. Does there exist a finite
independent and minimal subset of $\texttt{I}$? 4. Does there
exist such a subset  of $\texttt{I}$ that is complete in the sense
that if all invariants are zero, the spacetime is degenerate Kundt
(i.e., the conditions are sufficient as well as necessary) for some appropriate subclass?

%%%%\textbf{[S:Kundt spacetimes having the same invariants as a weakly (but not stongly) I-non-deg spacetime would be counterexamples of an 'iff' here. We do indeed provide such an example in the invarant charact paper, ref[1]...The example is a pair of spacetimes having identical invariants, one is Kundt, the other is not. ]}

This last possibility is not guaranteed, since it is not
immediately clear that the conditions of alignment and Kundt (in
the definition of degenerate Kundt) are contained in this list.
And the converse is false, in general, since there exist
counterexamples that were presented in  \cite{inv}. In addition,
consider, for example, a symmetric space in which
$\nabla{Riem}=0$. This means that we can construct no invariants
using the covariant derivatives of the Riemann tensor; i.e., we
have $\texttt{I}^s_{~1}$ (\ref{rictypeii}) constructed from the
Ricci tensor, $\texttt{I}^w_{~1}, \texttt{I}^w_{~2}$ (\ref{weyl1}
-- \ref{weyl2}) constructed from the Weyl tensor, and possibly an
invariant $\texttt{I}^o_{~1}$ constructed from the zeroth order
mixed invariants. Therefore, the spacetime is very restricted.
Clearly, there is a relationship between such spacetimes and
symmetric and $k$-symmetric spacetimes. For example, in
\cite{Senovilla} the set of 2-symmetric spacetimes were
investigated. It was found that a 2-symmetric spacetime is either
CSI, or there exists a covariantly constant null vector (CCNV) (a
similar result was conjectured for the set of $k$-symmetric
spacetimes).

Another approach in determining an invariant characterization of
the degenerate Kundt class (or Kundt class) would be to start by
considering an independent set of differential invariants of the
Riemann tensor and its covariant derivatives.  By requiring the
metric to be degenerate Kundt would result in syzygies among the
set of differential invariants; thus necessary conditions can be
derived.  In \cite{REN}, the authors have developed the
\emph{Invar} tensor package and performed a detailed study of the
differential invariants of the Riemann tensor.  By using their set
of independent differential invariants of the Riemann tensor and
expressing them in terms of NP scalars, the degeneracies that
occur within this set once the specialization
$\kappa=\sigma=\rho=0$, the Riemann tensor and $\nabla(Riem)$ are
type II are imposed, can be investigated.

%\newpage

\section{Higher dimensions}

The higher dimensional Kundt metrics are given by (\ref{Kundt}),
with $i,j = 2,..., n-1$, in the kinematic frame (\ref{null
frame}), (\ref{Rcoefs}). {\footnote{In a very recent preprint
\cite{POD}, building upon the earlier work of \cite{PODREF}, the
class of higher dimensional Kundt spacetimes has been analysed,
with an emphasis on studying exact solutions of the higher
dimensional Einstein-Maxwell equations for a vacuum or aligned
Maxwell field with and without a cosmological constant.}} From
eqns. (49) and (54) in \cite{CSI} it follows that if $W_{i,vv}=0$
for all $i$, then all positive  boost weight terms of the Riemann
tensor are zero in the kinematic frame, and the spacetime is of
aligned algebraically special Riemann type II. From the equations
for $\nabla(Riemann)$ in this frame (cf. eqns. (55) and  (60)-(63)
in \cite{CSI}), it then follows that if $H_{,vvv}=0$, then all of
the positive  boost weight terms of the covariant derivative of
the Riemann tensor are zero. By a direct calculation, it then
follows that all of the positive boost weight terms of
$\nabla^2(Riem)$ are zero in this aligned frame. It then follows
from an argument similar to that of Theorem 3.1 that all of the
positive  boost weight terms of all of the covariant derivatives
of the Riemann tensor are zero. Hence, if $W_{i,vv}=0$ and
$H_{,vvv}=0$, the higher dimensional Kundt metric is degenerate.

\subsection{Examples of degenerate Kundt spacetimes in higher dimensions}

Spacetimes which are  CSI, VSI or CCNV are degenerate-Kundt
spacetimes.

\subsubsection{Constant scalar curvature invariants}

Lorentzian spacetimes for which all polynomial scalar invariants
constructed from the Riemann tensor and its covariant derivatives
are constant ($CSI$ spacetimes) were studied in \cite{CSI}. If a
spacetime is {$CSI$}, the spacetime is either locally homogeneous
or belongs to the higher dimensional Kundt $CSI$ class (the
${CSI_K}$ conjecture) and can be constructed from locally
homogeneous spaces and $VSI$ spacetimes. The $CSI$ conjectures
were proven in four dimensions in \cite{CSI4}.

In \cite{CSI} it was shown that for Kundt $CSI$ metric of the form
(\ref{Kundt}) there exists (locally) a coordinate transformation
such that the transverse is independent of $u$ and a locally
homogeneous space. The remaining $CSI$ conditions then imply that
\begin{equation}
W_{i}(v,u,x^k)=v{W}_{i}^{(1)}(u,x^k)+{W}_{i}^{(0)}(u,x^k),\label{KundtCH}
\end{equation}
\begin{equation}
H(v,u,x^k)=\frac{v^2}{8}\left[4\sigma+({W}_i^{(1)})({W}^{(1)i})\right]+
v{H}^{(1)}(u,x^k)+{H}^{(0)}(u,x^k), \label{KundtCW}
\end{equation}
where $\sigma$ is a constant \cite{CSI}.
\subsubsection{Vanishing  scalar curvature invariants}
All curvature invariants of all orders vanish in an
$n$-dimensional Lorentzian spacetime if and only if there exists
an aligned non-expanding, non-twisting, shear-free geodesic null
direction $\ell^a$ along which the Riemann tensor has negative
boost order \cite{Higher}. Thus the Riemann tensor, and
consequently the Weyl and Ricci tensors, are of algebraic type
{\bf III}, {\bf N} or {\bf O} \cite{class}), and $VSI$ spacetimes
belong to the Kundt class \cite{kramer}. It follows that any
{$VSI$} metric can be written in the form  (\ref{Kundt}), where
local coordinates can be chosen so that the transverse metric is
flat; i.e., $g_{ij}=\delta_{ij}$ \cite{CSI}. The~metric functions
$H$ and $W_{i}$ in the Kundt metric (which can be obtained by
substituting $\sigma=0$ in eqns. (\ref{KundtCH}) and
(\ref{KundtCW})), satisfy the remaining vanishing scalar invariant
conditions and any relevant  Einstein field equations.

\subsubsection{Covariantly constant null vector:}

The aligned, repeated, null vector $\bl$ of (\ref{Kundt}) is a
null Killing vector (KV) if and only if $H_{v}=0$ and $W_{i,v}=0$,
and it then follows that  $\bl$ is also covariantly constant.
Therefore, the most general metric that admits a covariantly
constant null vector (CCNV) is (\ref{Kundt}) with $H = H(u,x^k)$
and $W_{i} = W_{i}(u,x^k)$ and is of Ricci and Weyl type {\bf II}
\cite{class}. In 4D the CCNV spacetimes are the well-known pp-wave
type {\bf N} VSI spacetimes.

\subsection{Discussion: supersymmetry and holonomy }

Supersymmetric solutions of supergravity theories have played an
important role in the development of string theory. The existence
of parallel (Killing) spinor fields, plays a central role in
supersymmetry. In the physically important dimensions below twelve
the maximal indecomposable Lorentzian holonomy groups admitting
parallel spinors are known \cite{Bryant}. A systematic
classification of supersymmetric solutions in $M$-theory was
provided in  \cite{jFF99}. There are two classes of solutions. If
the spacetime admits a covariantly constant time-like vector, the
spacetime is static and its classification reduces  to the
classification of $10$-dimensional Riemannian manifolds.

The second class of solutions consists of spacetimes which are not
static but which admit a covariantly constant null vector (CCNV).
The isotropy subgroup of a null spinor is contained in the
isotropy subgroup of the null vector, which in arbitrary
dimensions is isomorphic to the spin cover of $ISO(n-2)  \subset
SO(n-1,1)$. For $n\leq 5$ this means the holonomy group is
$\RR^{n-2}$, which implies that the metric is Ricci-null. This
leads to the n-dimensional Kundt spacetimes (see eqn.
(\ref{Kundt}), with $H_{,v} =0$ and $W_{i,v} = 0$, whence the
metric no longer has any $v$ dependence) \cite{CCNV,class}. This
class includes Kundt-$CSI$ and Kundt-$VSI$ spacetimes as special
cases.{\footnote {The null vector of a metric with ${\rm
Sim}(n-2)$ holonomy is a {\it recurrent null vector} ($RNV$) and
the metric belongs to the class of Kundt metrics ((\ref{Kundt})
with $W_{i} = W_{i}(u,x^k)$), are also of interest \cite{CGHP}.}}
The $VSI$ and $CSI$ spacetimes are of fundamental importance since
they are solutions of supergravity or superstring theory, when
supported by appropriate bosonic fields \cite{CFH}. Supersymmetry
in $VSI$ and $CSI$ type IIB supergravity solutions was studied in
\cite{CFH}.

The classification of holonomy groups in Lorentzian spacetimes is
quite different from the Riemannian case since the de Rham
decomposition theorem does not apply without modification. For a
Lorentzian manifold $M$ there are the following two possibilities
\cite{lBaI93}: {\it Completely reducible:} Here $M$ decomposes
into irreducible or flat Riemannian manifolds and a manifold which
is an irreducible or a flat Lorentzian manifold or
$(\mathbb{R},-dt)$. The irreducible Riemannian holonomies are
known, as well as the irreducible Lorentzian one, which has to be
the whole of $SO(1,n-1)$. {\it Not completely reducible:} This is
equivalent to the existence of a degenerate invariant subspace and
entails the existence of a holonomy invariant lightlike subspace.
The Lorentzian manifold decomposes into irreducible or flat
Riemannian manifolds and a Lorentzian manifold with
indecomposable, but non-irreducible holonomy representation; i.e.,
with (a one-dimensional) invariant lightlike subspace. These are
the $CCNV$ and $RNV$ (Kundt) spacetimes, which contain the $VSI$
and $CSI$ subclasses as special cases \cite{johan}.

Therefore, the  Kundt spacetimes that are of particular physical
interest are degenerately reducible, which leads to complicated
holonomy structure and various degenerate mathematical properties.
Such spacetimes have a number of other interesting and unusual
properties, which may lead to novel and fundamental physics.
Indeed, a complete understanding of string theory is not possible
without a comprehensive knowledge of the properties of the Kundt
spacetimes \cite{johan}. For example, in general a Lorentzian
spacetime is completely classified by its set of scalar polynomial
curvature invariants. However, this is not true for the degenerate
Kundt spacetimes \cite{CSI} (i.e., they have important geometrical
information that is not contained in the scalar invariants). This
leads to interesting problems with any physical property that
depends essentially on scalar invariants, and may lead to
ambiguities and pathologies in models of quantum gravity or string
theory.

As an illustration, in many theories of fundamental physics there
are geometric classical corrections to general relativity.
Different polynomial curvature invariants (constructed from the
Riemann tensor and its covariant derivatives) are required to
compute different loop-orders of renormalization of the
Einstein-Hilbert action. In specific quantum models such as
supergravity there are particular allowed local counterterms
\cite{sugra}. In particular, a classical solution is called {\it
universal}  if the quantum correction is a multiple of the metric.
In \cite{CGHP} metrics of holonomy $\mathrm{Sim}(n-2)$ were
investigated, and it was found that all 4-dimensional
$\mathrm{Sim}(2)$ metrics are universal and consequently can be
interpreted as metrics with vanishing quantum corrections and are
automatically solutions to the quantum theory. The $RNV$ and
$CCNV$ (Kundt) spacetimes therefore play an important role in the
quantum theory, regardless of what the exact form of this theory
might be.

\subsection{Results}

\

Many of the results in this paper (for 4D spacetimes) can be
generalized to higher dimensions. In particular, we would like to
prove that the degenerate Kundt metrics are the only metrics not
determined by their curvature invariants (i.e., not
$\mathcal{I}$-non-degenerate) in any dimension. There are higher
dimensional generalizations to Theorem \ref{thmK2} and the type D
result, which we will present in subsections 7.3.2 and 7.3.3.

\subsubsection{Partial converse result}

In the analysis in 4D it was determined for which Segre types for
the Ricci tensor  the spacetime is $\mathcal{I}$-non-degenerate
(similar results were obtained for the Weyl tensor). In each case,
it was found that the Ricci tensor, considered as a curvature
operator, admits a timelike eigendirection. Therefore, if a
spacetime is not $\mathcal{I}$-non-degenerate, its Ricci tensor
must be of a particular Segre type (corresponding to the
non-existence of a unique timelike direction). Therefore, it is
plausible that if the algebraic type of the Ricci tensor (or any
other $(0,2)$ curvature operator written in `Segre form') is not
of one the following types:

\begin{enumerate}

\item{}
$\{2111...\}$,$\{2(11)1...\}$,$\{2(111)...\}$,...,$\{2(111...)\}$,
\item{}
$\{(21)11...\}$,$\{(21)(11)1...\}$,$\{(21)(111)...\}$,...,$\{(21)(111...)\}$,
\item{}
$\{(211)11...\}$,$\{(211)(11)1...\}$,$\{(211)(111)...\}$,...,$\{(211)(111...)\}$,
\item{}
$\{3111...\}$,$\{3(11)1...\}$,$\{3(111)...\}$,...,$\{3(111...)\}$,
\end{enumerate}
and so on, then the spacetime is $\mathcal{I}$-non-degenerate.
(Similar results in terms of the Weyl tensor in bivector form are
possible). It remains to prove the converse; namely, if it is of
one of these types it must be degenerate Kundt.

\subsubsection{Kundt Theorem}

Let us show that Theorem \ref{thmK2} is valid in any dimension.
The Proof of Theorem \ref{thmK2} was computational in nature and
specific to 4 dimensions only; here we will give an alternative
proof, valid in any dimension.

\begin{thm}
In the higher-dimensional Kundt class, $K_1$ implies $K_n$ for all $n\geq 2$.
\end{thm}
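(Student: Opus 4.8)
The plan is to discard the coordinate calculation of Theorem \ref{thmK2} and argue purely through the boost-weight grading, isolating two facts that hold verbatim for $i,j=2,\dots,n-1$. \emph{Fact 1:} in the kinematic frame $\ell$ is affinely geodesic, expansion-, shear- and twist-free, so by (\ref{Rcoefs}) together with metricity of the connection every rotation coefficient of positive boost weight vanishes (the boost weight $+2$ datum is the acceleration and the boost weight $+1$ datum is the optical matrix $\ell_{a;b}m^a_im^b_j$, both zero); hence, in the frame expansion of $\nabla T$, every \emph{connection} term has boost weight $\le$ that of $T$. \emph{Fact 2:} the only naked frame derivative that raises boost weight is $D=\ell^a\nabla_a=\partial_v$, which appends a lower frame index of boost weight $+1$; the $\delta_i$ preserve boost weight and $\Delta=n^a\nabla_a$ lowers it. Combining the two, for any tensor $T$ built from curvature one gets $(\nabla T)_{+1}=D\,T_0+(\text{terms fed only by }T_{+1},T_{+2},\dots)$. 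Nothing here uses the range of $i,j$, which is exactly why the argument will be dimension-free.

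Next I would run the induction. Assuming $K_n$, so that $T=\nabla^{n}(Riem)$ has $T_{+1}=T_{+2}=\cdots=0$, the displayed identity collapses to $(\nabla^{n+1}R)_{+1}=D\,(\nabla^{n}R)_0$; thus $K_{n+1}$ is \emph{equivalent} to $D\,(\nabla^{n}R)_0=0$. To feed the induction I would carry alongside $K_n$ the same functional-dependence invariant used in the proof of Theorem \ref{thmK2}: that the boost weight $0$ components $(\nabla^{n}R)_0$ involve $W_i$ only through $W_{i,v\cdots}$ and $H$ only through $H_{,vv\cdots}$ (where $\cdots$ denotes $u$- and $x$-derivatives, and $g_{ij}$ is $v$-independent), while $(\nabla^{n}R)_{-1}$ contains $H$ only through $H_{,v\cdots}$ and never bare. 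Granting this invariant, the $K_1$ hypotheses $W_{i,vv}=0$ and $H_{,vvv}=0$ give $D\,(\nabla^{n}R)_0=0$ at once, hence $K_{n+1}$; it only remains to propagate the invariant itself.

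To propagate it I would decompose one more derivative: $(\nabla^{n+1}R)_0=D(\nabla^{n}R)_{-1}+\delta_i(\nabla^{n}R)_0+(\text{connection terms})$ and $(\nabla^{n+1}R)_{-1}=D(\nabla^{n}R)_{-2}+\delta_i(\nabla^{n}R)_{-1}+\Delta(\nabla^{n}R)_0+(\text{connection terms})$. Here I would read off from (\ref{null frame}) and (\ref{Rcoefs}) the $v$-content of the rotation coefficients: the boost weight $0$ coefficients depend only on $W_{i,v}$ and $g$, whereas the boost weight $-1$ coefficients depend on $W_{i,v}$ and on $H_{,v}$ (one $v$-derivative of $H$, none bare). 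Since $\delta_i=(m^{-1})^{k}_{\ i}\partial_{x^k}-W_i\partial_v$, its bare-$W_i$ piece acts only as $\partial_v$ on the dependence class and so vanishes under $K_1$, leaving $\delta_i$ mapping the class into itself; the commutators $[D,\delta_i]$ and $[D,\Delta]$ are proportional to $D$ (the higher-dimensional analogue of (\ref{commDeldelD})), so they preserve $\partial_v$-annihilation. Feeding these facts shows that each building block reproduces exactly the required $W_{i,v\cdots}$/$H_{,vv\cdots}$ dependence at boost weight $0$ and the $H_{,v\cdots}$ dependence at boost weight $-1$.

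The delicate step—and the one I expect to be the main obstacle—is the contribution of $\Delta=\partial_u-\bigl[H+\tfrac12 g^{ij}W_iW_j\bigr]\partial_v+g^{ij}W_i\partial_{x^j}$ to the boost weight $-1$ part, because it carries the \emph{bare} function $H$ multiplying $\partial_v$, which threatens to inject an $H$ with too few $v$-derivatives. The resolution, exactly as in the four-dimensional proof, is that $\Delta$ produces a boost weight $-1$ term only by acting on $(\nabla^{n}R)_0$, and on that piece $\partial_v$ already kills it by the identity $D(\nabla^{n}R)_0=0$ just established; hence the dangerous bare-$H$ term drops out and no bare $H$ ever appears at boost weight $-1$. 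With this verified the invariant closes, and the base case $n=1$ is the same finite check as in Theorem \ref{thmK2}, now read off from the higher-dimensional $\nabla(Riem)$. Since no step used the dimension, the induction yields $K_n$ for all $n\ge2$ in every dimension.
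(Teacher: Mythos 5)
Your proof is sound in outline, but it takes a genuinely different route from the one the paper uses for this theorem. You generalize the computational, coordinate-adapted induction of Theorem \ref{thmK2} --- tracking the functional dependence of each boost-weight level of $\nabla^{(n)}(Riem)$ on $W_{i,v\cdots}$ and $H_{,vv\cdots}$ --- to a general transverse metric $g_{ij}(u,x^k)$, and you correctly isolate and dispose of the one dangerous term (the bare $H$ multiplying $\partial_v$ inside $\Delta$, killed by the already-established $D(\nabla^{n}R)_0=0$). The paper instead deliberately abandons this route in higher dimensions and gives a coordinate-free argument: since the Kundt condition makes all positive boost weight connection components vanish, the algebraic part of $\nabla$ cannot raise boost weight, so any positive boost weight component of a covariant derivative must carry a ``$;0$'' index; the Bianchi identity then trades $R_{abcd;0}$ for components that vanish by $K_0$ (except $R_{0101;0}$, which is killed by $K_1$), and the generalized Ricci identity commutes the offending derivative inward at higher orders. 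What the paper's argument buys is that it never needs to know how $H$, $W_i$ or $g_{ij}$ enter the curvature components --- no frame computation, no base-case check, manifestly dimension-free. What your argument buys is strictly more information (the explicit $v$-dependence of every boost-weight level, which is what underlies the VSI/CSI characterizations), at the price of a dimension-general recomputation of $\nabla(Riem)$ and $\nabla^{2}(Riem)$ for a non-conformally-flat transverse metric, which you defer as a ``finite check''; that computation is asserted in the opening of Section 7 but is exactly the step the paper's alternative proof was designed to avoid. If you carry out that base case, your induction closes.
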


%%%%%%%%%%%%%%%%%%%%%%%%%%%%%%%%%%%
\begin{proof}
First, $K_1$ implies that there exists a frame such that all
positive boost weight  components of the connection coefficients
are zero (this is presicely the Kundt condition in higher
dimensions), with respect to which all positive boost weight
components of Riemann and $\nabla(Riem)$ are zero. In order to
show that this implies $K_n$ for all $n$, we will use the two
identities:
\begin{eqnarray}
 && R_{ab(cd;e)}= 0,\quad  \text{(Bianchi identity)} \\
&& [\nabla_{a}, \nabla_{b}] T_{d_1...d_k}=\sum_{i=1}^k T_{d_1...e...d_k}R^{e}_{~d_i ab}\quad
\text{(Gen. Ricci identity)}.
\end{eqnarray}
Let us first assume only $K_0$. The covariant derivative consists
of a partial derivative and an algebraic piece; symbolically we
can write: \beq \nabla T=\partial T-\sum \Gamma *T. \eeq Since the
connection coefficients do not contain positive boost weight
components the algebraic piece, $\sum \Gamma *T$, cannot raise the
boost weight. Let us, for simplicity, denote components with
$a=0,1, i$, where $\ell^aT_{a...b}=T_{0...b}$, ${\bf
n}^aT_{a...b}=T_{1...b}$, and ${\bf m}_i^{~a}T_{a...b}=T_{i...b}$.
The components $0$, $1$ and $i$ (`downstairs') would therefore
carry a boost weight $+1$, $-1$ and $0$, respectively. The only
way that the boost weight can be raised (since the connection
coefficients are of boost order 0) is through the partial
derivatives. If we consider, for example, the +1 component:
\[ \nabla_iR_{jkn0}=\partial_i(R_{jkn0})=0, \]
since $R_{ijk0}=0$ by the $K_0$ assumption. Analogously, by
considering all +1 components of the form $\nabla_i R_{abcd}$, we
consequently find that they  are all zero. Similarly, we  cannot
obtain any +1 components by applying ${\bf n}^a\nabla_a$; hence,
positive boost weight  components of $\nabla R$ can only come from
the covariant derivative with respect to $\ell$.

Consider the positive boost weight  components of the form
\[ \ell^a\nabla_{a}T_{bc...}\equiv T_{bc...;0}.\]
The possible (independent) boost weight  +1 components of $\nabla
R$ are:
\[ R_{ij01;0}=R_{01ij;0}, ~~ R_{0i1j;0}, ~~ R_{ijkl;0}, ~~R_{0101;0}\]
For the first covariant derivative of the Riemann tensor, we can
use the Bianchi identity:
\[ R_{abcd;0}=-R_{ab0c;d}+R_{ab0d;c}.\]
Using the Bianchi identity and the above results,  we see that all
of the positive boost weight  components have to be zero, except
possibly $R_{0101;0}$ (this component corresponds to the term
$H_{,vvv}$ and is, in general, non-zero). However, by also
assuming $K_1$, $R_{0101;0}=0$ (and consequently, $H_{,vvv}=0$).
Therefore, let us assume that $(R)_{b>0}=0$ and $(\nabla
R)_{b>0}=0$, and consider $\nabla\nabla R$. By the argument above,
using $T=\nabla R$, we see that the only possible contributors to
the positive boost weight components are $\ell^a\nabla_a (\nabla
R)$. The right-hand side of the generalised Ricci identity is
purely algebraic and hence, because of the $K_0$ assumption, the
positive boost weight components of $[\nabla,\nabla]R$ must be
zero. Consider, for example, the component
\[ R_{01ij;k0}=R_{01ij;0k}+(\text{alg. piece}).\]
So, $R_{01ij;k0}=R_{01ij;0k}=\nabla_k(R_{01ij;0})=0$. For all
other components, except for $R_{abcd;00}$, we can use the same
trick to show that the positive boost weight components are also
zero. For $R_{abcd;00}$ consider, for example, the boost weight +1
component
\[ R_{01ij;00}=-(R_{010i;j})_{;0}+(R_{010j;i})_{;0} \]
(using the Bianchi identity); consequently, by the Ricci identity
this component is also zero. In the same way, all of the positive
boost weight components can be shown to be zero also. Thus,
$(\nabla\nabla R)_{b>0}=0$, and the spacetime is $K_2$.

We have now shown that $K_1$ implies $K_2$. In the same manner, we
can use these arguments recursively for any $\nabla^n R$ by using
the Ricci identity and the Bianchi identity. Therefore, we cannot
acquire positive boost weight components by taking  the covariant
derivatives of the Riemann tensor.  Hence, the spacetime must be
$K_n$ for all $n$.
\end{proof}

%%%%%%%%%%%%%%%%%%%%%%%%%%%%%

\subsubsection{Type D Theorem}

Suppose, in higher dimensions, there exists a frame in which all
of the positive and negative boost weight components of the
Riemann tensor and all of its covariant derivatives  $\nabla^{(k)}
(Riem)$ are zero (in this frame). Let us prove that the resulting
spacetime is Kundt (this was shown to be true in 4D in
\cite{inv}).

\begin{thm}
If for a spacetime $(\mathcal{M},{\bf g})$, the Riemann tensor and
all of its covariant derivatives $\nabla^{(k)}(Riem)$ are
simultaneously of type D (in the same frame), then the spacetime
is \emph{degenerate Kundt}.
\end{thm}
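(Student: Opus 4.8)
The plan is to reduce the statement to a single kinematical claim about the preferred null direction $\ell$ of the common type~D frame: that its positive boost weight Ricci rotation coefficients, namely the acceleration $L_{i0}$ (geodesy) and the optical matrix $L_{ij}$ (expansion, shear and twist), can be made to vanish. This suffices, because by hypothesis every $\nabla^{(k)}(Riem)$ has only boost weight $0$ components in this frame, so all positive boost weight components of $Riem,\nabla(Riem),\dots$ are already zero; together with $L_{ij}=0$ (and geodesy) this is exactly the definition of a degenerate Kundt spacetime, and the metric can then be brought to the form (\ref{Kundt}). One structural point must be settled first. The aligned null directions $\ell,{\bf n}$ are singled out by the curvature only when the type~D data are \emph{proper}; when the boost weight $0$ components are isotropic (no frame is distinguished) the spacetime is locally a space of constant curvature, or more generally locally symmetric and $CSI$, which can be cast in Kundt form (cf.\ (\ref{KundtCH})--(\ref{KundtCW})) and is therefore degenerate Kundt, possibly in a frame different from the arbitrary one we started with. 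I would dispose of this degenerate branch directly and then assume $\ell$ is curvature determined.

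The engine is boost weight bookkeeping for the covariant derivative, exactly as in the proof of the preceding (higher-dimensional) Kundt theorem: writing $\nabla T=\partial T-\sum\Gamma*T$, the boost weight of a component can be raised only in two ways, by differentiating along $\ell$ (the operator $D$, which raises it by $+1$) or through the algebraic action of the positive boost weight connection coefficients, namely $L_{ij}$ (boost weight $+1$) and $L_{i0}$ (boost weight $+2$). I would first extract geodesy. Since $Riem$ is type~D its boost weight $+2$ part vanishes, so the only source of boost weight $+2$ terms in $\nabla(Riem)$ is the algebraic product of $L_{i0}$ with the (nonvanishing) boost weight $0$ components of $Riem$. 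Imposing $(\nabla(Riem))_{+2}=0$ then forces $L_{i0}=0$, i.e.\ $\ell$ is geodesic, and by the $\ell\leftrightarrow{\bf n}$ symmetry of type~D so is ${\bf n}$.

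With $\ell$ geodesic, the remaining and harder task is to annihilate the optical matrix $L_{ij}$. The difficulty is that $L_{ij}$ enters the boost weight $+1$ part of $\nabla(Riem)$ only alongside the genuine derivative term $D(Riem)_0$, so that $(\nabla(Riem))_{+1}=0$ yields merely the transport relation $D(Riem)_0=-L_{ij}*(Riem)_0$ rather than $L_{ij}=0$. To separate the derivative from the algebraic contribution I would use the second Bianchi identity $R_{ab(cd;e)}=0$ to re-express the $D$-derivative of boost weight $0$ curvature in terms of transverse and $\Delta$ derivatives, and then feed the entire hierarchy $(\nabla^{(k)}(Riem))_{+}=0$, together with the generalized Ricci (commutator) identity (which contributes purely algebraically), into the analysis so as to generate $D$-free constraints of the schematic form $L_{ij}*(\nabla^{(k)}(Riem))_{0}=0$. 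A case analysis on the algebraic subtype of the boost weight $0$ data should then show these admit no nontrivial $L_{ij}$ outside the already-excluded isotropic branch, giving $L_{ij}=0$.

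The step I expect to be the main obstacle is precisely this last one: disentangling the optical matrix $L_{ij}$ from the along-$\ell$ derivative $D$, and proving, in arbitrary dimension and for every admissible algebraic subtype of the boost weight $0$ curvature, that the infinite family of type~D conditions forces shear and, especially, twist to vanish. Establishing geodesy and handling the constant-curvature branch are comparatively routine; controlling the full optical matrix through the tower $\{\nabla^{(k)}(Riem)\}_{k\geq0}$ uniformly in the dimension $n$ is where the real work lies.
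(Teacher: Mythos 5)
Your strategy is a direct boost-weight bookkeeping argument on the connection coefficients, which is genuinely different from the paper's proof, and it contains a gap that you yourself identify but do not close: the passage from the tower of conditions $(\nabla^{(k)}(Riem))_{+1}=0$ to the vanishing of the optical matrix $L_{ij}$. As you observe, $(\nabla(Riem))_{+1}=0$ only yields the transport relation $D(Riem)_0=-L_{ij}*(Riem)_0$, and your plan to ``disentangle'' $D$ from $L_{ij}$ via the Bianchi and commutator identities, followed by a case analysis on the algebraic subtype that ``should'' exclude nontrivial $L_{ij}$, is a statement of intent rather than an argument. This is precisely where the content of the theorem lies (Riemann type D alone certainly does not imply Kundt --- Schwarzschild is type D and not Kundt --- so everything hinges on how the full tower is exploited), and no mechanism is exhibited that actually extracts $L_{ij}=0$, uniformly in dimension and for every admissible configuration of boost weight $0$ curvature. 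The disposal of the ``isotropic branch'' is likewise asserted rather than proved.

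The paper's proof avoids this impasse entirely by turning the boost invariance of the type D data into a symmetry rather than treating it as an obstacle. Since every $\nabla^{(k)}(Riem)$ has only boost weight $0$ components, the whole set of Cartan invariants is preserved by the one-parameter family of boosts at each regular point $p$; by the equivalence theory this boost is realized by a local isometry $\phi_t$ fixing $p$. Invariance of the connection under $\phi_t$, written out in the type D frame, then kills all boost weight $\pm 2$ and $\pm 1$ connection components algebraically, except for $\Gamma^{0}_{~00}=-\Gamma^{1}_{~10}$ and $\Gamma^{0}_{~01}=-\Gamma^{1}_{~11}$, which transform inhomogeneously and one of which can be removed by a further boost; the vanishing of the positive boost weight connection components is exactly geodesy plus $L_{i0}=L_{ij}=0$, i.e., the Kundt condition. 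If you wish to salvage your route you must supply the missing algebraic analysis of the optical matrix; alternatively, note that the boost isotropy you set aside as a degenerate nuisance is in fact the engine of the paper's argument.
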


\begin{proof}
If the Riemann tensor and all of its covariant derivatives are of
type D in the same frame; i.e., there exists a frame such that
\[ R=(R)_0, \quad \nabla^{(k)}R=(\nabla^{(k)}R)_0, \]
then at every point all the curvature tensors are boost-invariant.
In particular, the curvature tensors experience a boost-isotropy.
Therefore, consider a point $p$ and assume this is
regular\footnote{In the sense of \cite{kramer}; i.e., the number
of independent Cartan invariants do not change at $p$.}. At this
point there is a one-parameter family of  boosts $B_t:T_pM\mapsto
T_pM$, which maps the frame onto another frame with identical
components of the curvature tensors. Explicitly, we can define the
boost to be over the neighborhood $U$ relative to the canonical
type $D^k$ frame:
\[
{\mbold\ell}\mapsto e^{\psi(t;x^\mu)}{\mbold\ell}, \quad {\bf
n}\mapsto e^{-\psi(t;x^\mu)}{\bf n}, \quad {\bf m}^i\mapsto {\bf
m}^i,
\]
for any $\psi(t;x^{\mu})$ such that when restricted to $p$ gives
$B_t$. Note that any such boost will leave the curvature tensors
invariant; hence, the Cartan scalars of the transformed tensor are
identical to the original ones. By the equivalence principle,
since this is an isometry of the curvature tensors at the point
$p$, there exists an isometry $\phi_t$ on a neighborhood $U$ of
$p$ such that it induces the map $B_t$, at the tangent space
$T_pM$ \cite{kramer,KN}. Furthermore, we can choose $\phi_t(p)=p$;
hence, the isometry $\phi_t$ is in the isotropy group at $p$.
Define also the map $M$ as the induced map (the push-forward) of
$\phi_t$ acting on $TM$ over $U$; i.e., in components,
$\phi_{t*}({\bf e}_{\mu})=M^\nu_{~\mu}{\bf e}_\nu$ over $U$. Note
that at $p$, $M$ coincides with $B_t$, and hence the map $M$ must
act, up to conjugation, as a boost. Since $\phi_t$ is an isometry,
this boost must also be in the stabilizer of the curvature tensor.
Therefore, align the null-frame over $U$ such that this boost acts
as $({\mbold \ell},{\bf n},{\bf m}^i)\mapsto (e^\lambda{\mbold
\ell},e^{-\lambda}{\bf n},{\bf m}^i)$ (note that the curvature
tensors must still be of type D with respect to this frame).

Since an isometry leaves the connection invariant \cite{kramer,KN}
(i.e., if ${\mbold\Omega}$ is the connection form, then
$\tilde{\phi}_t^*{\mbold\Omega}={\mbold\Omega}$, where
$\tilde{\phi}_t$ is the induced transformation on the frame
bundle), we get over $U$:
\[ \Gamma^{\mu}_{~\alpha\beta}=(M^{-1})^\mu_{~\nu}\left[M^\gamma_{~\alpha}\phi_t^*(\Gamma^{\nu}_{~\gamma\delta})+M^\gamma_{~\alpha,\delta}\right] M^\delta_{~\beta}.\]
Furthermore, since $p=\phi_t(p)$,  we  have
$\Gamma^{\mu}_{~\gamma\delta}=\phi_t^*(\Gamma^{\nu}_{~\gamma\delta})$
at $p$. Moreover, in the aforementioned type D frame, we have
$M^0_{~0,\mu}=-M^1_{~1,\mu}$, while all other components of
$M^\gamma_{~\alpha,\delta}$ are zero. Consequently, the following
components of $\Gamma^{\mu}_{~\alpha\beta}$ must vanish: \beq
\Gamma^{i}_{~00}=\Gamma^1_{~i0}=\Gamma^{i}_{~0j}=\Gamma^1_{~ij}=\Gamma^i_{~j0}&=&0
\quad (\text{boost weight} +2, +1)
\nonumber \\
\Gamma^{i}_{~11}=\Gamma^0_{~i1}=\Gamma^{i}_{~1j}=\Gamma^0_{~ij}=\Gamma^i_{~j1}&=&0
\quad (\text{boost weight} -2, -1). \eeq Note that the only
remaining components of positive/negative boost weights are
$\Gamma^0_{~00}=-\Gamma^1_{~10}$ (boost weight +1) and
$\Gamma^0_{~01}=-\Gamma^1_{~11}$ (boost weight -1) (note that
these are the components that do not transform algebraically but
rather through derivatives of the boost parameter). At least one
of these can be set to zero by a boost,  which means that there
exists a frame such that all positive boost weight components of
the connection coefficients are zero. The vanishing of these
positive boost weight components precisely corresponds to the
existence of a shear-free, expansion-free, twist-free, geodesic
null congruence, at the point $p$. However, since this is valid at
any point $p$, this means that the spacetime is Kundt; there
exists a null vector field over $U$ which is geodesic,
expansion-free, shear-free and twist-free.
\end{proof}
Actually, by the argument of the proof we can see that there are
always two such null vector-fields for these  type $D^k$
spacetimes. They are, in general:
\[
{\bf k}_1=f{\mbold \ell}, \quad {\bf k}_2=g{\bf n}.\] It is always
possible to boost so that \emph{either } $f=1$ \emph{or} $g=1$.
However, if $k_1^{\mu}k_{2\mu}=fg\neq 1$, it is not possible to
boost so that \emph{both} $f=1$ and $g=1$. Therefore, there are
two null-vector fields that are shear-free, expansion-free,
twist-free and geodesic; one is aligned with ${\mbold\ell}$, the
other is aligned with ${\bf n}$.

\newpage

\section*{Acknowledgments}
This work was supported by the Natural Sciences and Engineering
Research Council of Canada and the Killam Foundation through a
Killam Postdoctoral Fellowship (GOP).

\end{document}